\documentclass[letterpaper]{article} 
\usepackage{aaai25}  
\usepackage{times}  
\usepackage{helvet}  
\usepackage{courier}  
\usepackage[hyphens]{url}  
\usepackage{graphicx} 
\urlstyle{rm} 
\usepackage{natbib}  
\usepackage{caption} 
\frenchspacing  
\setlength{\pdfpagewidth}{8.5in} 
\setlength{\pdfpageheight}{11in} 
%
\usepackage{algorithm}
\usepackage{algorithmic}

\usepackage{mathbbol}

\usepackage{tikz} 
\usepackage{amsmath,amsfonts,amssymb,dsfont,bm,amsthm}
\usepackage{csquotes}
\usepackage{mathbbol}
\usepackage{subcaption}
\captionsetup{compatibility=false}
\usepackage{thmtools,thm-restate}
\usepackage{MnSymbol}
\usepackage{bm}
\usepackage{accents}

\usepackage{ciphod}

\newtheorem{definition}{Definition}

\newtheorem{theorem}{Theorem}

%
\usepackage{newfloat}
\usepackage{listings}
\DeclareCaptionStyle{ruled}{labelfont=normalfont,labelsep=colon,strut=off} 
\lstset{%
	basicstyle={\footnotesize\ttfamily},
	numbers=left,numberstyle=\footnotesize,xleftmargin=2em,
	aboveskip=0pt,belowskip=0pt,%
	showstringspaces=false,tabsize=2,breaklines=true}
\floatstyle{ruled}
\newfloat{listing}{tb}{lst}{}
\floatname{listing}{Listing}
%
\pdfinfo{
/TemplateVersion (2025.1)
}

\setcounter{secnumdepth}{0} 

%



\title{Identifying Macro Conditional Independencies and Macro Total Effects in Summary Causal Graphs with Latent Confounding
\\[1ex] \large including appendix}

\author{
    Simon Ferreira\textsuperscript{\rm 1, 2},
    Charles K. Assaad\textsuperscript{\rm 1}
}
\affiliations{
    \textsuperscript{\rm 1}Sorbonne Université, INSERM, Institut Pierre Louis d’Epidémiologie et de Santé Publique, F75012, Paris, France\\
    \textsuperscript{\rm 2}ENS de Lyon, F69342, Lyon, France\\


    simon.ferreira@sorbonne-universite.fr, charles.assaad@inserm.fr
%
}

\usepackage{bibentry}

\begin{document}

\maketitle

\begin{abstract}
Understanding causal relations in dynamic systems is essential 
in epidemiology. 
While causal inference methods have been extensively studied, they often rely on fully specified causal graphs, which may not always be available 
in complex dynamic systems. Partially specified causal graphs, and in particular summary causal graphs (SCGs), provide a simplified representation of causal relations between time series when working spacio-temporal data, omitting temporal information and focusing on causal structures between clusters of of temporal variables. 
Unlike fully specified causal graphs, SCGs can contain cycles, which complicate their analysis and interpretation.
In addition, their cluster-based nature introduces new challenges concerning the types of queries of interest: macro queries, which involve relationships between clusters represented as vertices in the graph, and micro queries, which pertain to relationships between variables that are not directly visible through the vertices of the graph.
In this paper, we first clearly distinguish between macro conditional independencies and micro conditional independencies and between macro total effects and micro total effects. Then, we demonstrate the soundness and completeness of the \emph{d-separation} to identify macro conditional independencies in SCGs. Furthermore, we establish that the \emph{do-calculus} is sound and complete for identifying macro total effects in SCGs. 
Finally, we give a graphical characterization for the non-identifiability of macro total effects in SCGs.
\end{abstract}

\begin{figure*}[!ht]
	\centering
	\begin{subfigure}{.29\textwidth}
		\centering
		\begin{tikzpicture}[{black, circle, draw, inner sep=0}]
			\tikzset{nodes={draw,rounded corners},minimum height=0.6cm,minimum width=0.6cm}	
			\tikzset{latent/.append style={white, fill=black}}
			
			\node[fill=red!30] (X) at (0,0) {$X$} ;
			\node[fill=blue!30] (Y) at (2.4,0) {$Y$};
			\node (Z) at (1.2,0) {$W$};

			\begin{scope}[transform canvas={yshift=-.25em}]
				\draw [->,>=latex] (X) -- (Z);
			\end{scope}
			\begin{scope}[transform canvas={yshift=.25em}]
				\draw [<-,>=latex] (X) -- (Z);
			\end{scope}			
			
			\begin{scope}[transform canvas={yshift=-.25em}]
				\draw [->,>=latex] (Z) -- (Y);
			\end{scope}
			\begin{scope}[transform canvas={yshift=.25em}]
				\draw [<-,>=latex] (Z) -- (Y);
			\end{scope}

			\draw[<->,>=latex, dashed] (X) to [out=90,in=90, looseness=1] (Y);
			
			\draw[->,>=latex] (X) to [out=165,in=120, looseness=2] (X);
			\draw[->,>=latex] (Y) to [out=15,in=60, looseness=2] (Y);
			\draw[->,>=latex] (Z) to [out=15,in=60, looseness=2] (Z);
			
			\draw[<->,>=latex, dashed] (X) to [out=-155,in=-110, looseness=2] (X);
			\draw[<->,>=latex, dashed] (Y) to [out=-25,in=-70, looseness=2] (Y);
			\draw[<->,>=latex, dashed] (Z) to [out=-25,in=-70, looseness=2] (Z);			
		\end{tikzpicture}
		\caption{SCG.}
		\label{fig:SCG_0}
	\end{subfigure}
	\hfill 
	\begin{subfigure}{.33\textwidth}
		\centering
		\begin{tikzpicture}[{black, circle, draw, inner sep=0}]
			\tikzset{nodes={draw,rounded corners},minimum height=0.7cm,minimum width=0.6cm, font=\scriptsize}
			\tikzset{latent/.append style={white, fill=black}}
			
			\node  (Z) at (1.2,1.2) {$W_t$};
			\node (Z-1) at (0,1.2) {$W_{t-1}$};
			\node  (Z-2) at (-1.2,1.2) {$W_{t-2}$};
			\node[fill=blue!30] (Y) at (1.2,0) {$Y_t$};
			\node[fill=blue!30] (Y-1) at (0,0) {$Y_{t-1}$};
			\node[fill=blue!30]  (Y-2) at (-1.2,0) {$Y_{t-2}$};
			\node[fill=red!30] (X) at (1.2,2.4) {$X_t$};
			\node[fill=red!30] (X-1) at (0,2.4) {$X_{t-1}$};
			\node[fill=red!30]  (X-2) at (-1.2,2.4) {$X_{t-2}$};
			\draw[->,>=latex] (X-2) to (Z-2);
			\draw[->,>=latex] (X-1) to (Z-1);
			\draw[->,>=latex] (X) to (Z);
			\draw[->,>=latex] (Z-2) -- (Y-2);
			\draw[->,>=latex] (Z-1) -- (Y-1);
			\draw[->,>=latex] (Z) -- (Y);
			
			\draw[->,>=latex] (X-2) to (Z-1);
			\draw[->,>=latex] (X-1) to (Z);
			\draw[->,>=latex] (Z-2) -- (X-1);
			\draw[->,>=latex] (Z-1) -- (X);
			\draw[->,>=latex] (Z-2) -- (Y-1);
			\draw[->,>=latex] (Z-1) -- (Y);
			\draw[->,>=latex] (Y-2) to (Z-1);
			\draw[->,>=latex] (Y-1) to (Z);

			\draw[->,>=latex] (X-2) -- (X-1);
			\draw[->,>=latex] (X-1) -- (X);
			\draw[->,>=latex] (Y-2) -- (Y-1);
			\draw[->,>=latex] (Y-1) -- (Y);
			\draw[->,>=latex] (Z-2) -- (Z-1);
			\draw[->,>=latex] (Z-1) -- (Z);
			
			\draw[<->,>=latex, dashed] (X-2) to [out=180,in=180, looseness=1] (Y-2);
			\draw[<->,>=latex, dashed] (X-1) to [out=30,in=-30, looseness=0.5] (Y-1);
			\draw[<->,>=latex, dashed] (X) to [out=0,in=0, looseness=1] (Y);
			\draw[<->,>=latex, dashed] (X-2) to (Y-1);
			\draw[<->,>=latex, dashed] (X-1) to (Y);
			\draw[<->,>=latex, dashed] (X-2) to [out=80,in=100, looseness=1] (X-1);
			\draw[<->,>=latex, dashed] (X-1) to [out=80,in=100, looseness=1] (X);
			\draw[<->,>=latex, dashed] (Y-2) to [out=-80,in=-100, looseness=1] (Y-1);
			\draw[<->,>=latex, dashed] (Y-1) to [out=-80,in=-100, looseness=1] (Y);

			\draw [dashed,>=latex] (X-2) to[left] (-2,2.4);
			\draw [dashed,>=latex] (Z-2) to[left] (-2,1.2);
			\draw [dashed,>=latex] (Y-2) to[left] (-2,0);
			\draw [dashed,>=latex] (X) to[right] (2,2.4);
			\draw [dashed,>=latex] (Z) to[right] (2,1.2);
			\draw [dashed,>=latex] (Y) to[right] (2,0);
		\end{tikzpicture}		\caption{\centering Compatible FT-ADMG 1.}
		\label{fig:example2_FT-ADMG1}
	\end{subfigure}
	\hfill 
	\begin{subfigure}{.33\textwidth}
		\centering
		\begin{tikzpicture}[{black, circle, draw, inner sep=0}]
			\tikzset{nodes={draw,rounded corners},minimum height=0.7cm,minimum width=0.6cm, font=\scriptsize}
			\tikzset{latent/.append style={white, fill=black}}
			
			\node  (Z) at (1.2,1.2) {$W_t$};
			\node (Z-1) at (0,1.2) {$W_{t-1}$};
			\node  (Z-2) at (-1.2,1.2) {$W_{t-2}$};
			\node[fill=blue!30] (Y) at (1.2,0) {$Y_t$};
			\node[fill=blue!30] (Y-1) at (0,0) {$Y_{t-1}$};
			\node[fill=blue!30]  (Y-2) at (-1.2,0) {$Y_{t-2}$};
			\node[fill=red!30] (X) at (1.2,2.4) {$X_t$};
			\node[fill=red!30] (X-1) at (0,2.4) {$X_{t-1}$};
			\node[fill=red!30]  (X-2) at (-1.2,2.4) {$X_{t-2}$};
			\draw[->,>=latex] (Z-2) to (X-2);
			\draw[->,>=latex] (Z-1) to (X-1);
			\draw[->,>=latex] (Z) to (X);
			\draw[->,>=latex] (Y-2) -- (Z-2);
			\draw[->,>=latex] (Y-1) -- (Z-1);
			\draw[->,>=latex] (Y) -- (Z);
			
			\draw[->,>=latex] (X-2) to (Z-1);
			\draw[->,>=latex] (X-1) to (Z);
			\draw[->,>=latex] (Z-2) -- (X-1);
			\draw[->,>=latex] (Z-1) -- (X);
			\draw[->,>=latex] (Z-2) -- (Y-1);
			\draw[->,>=latex] (Z-1) -- (Y);
			\draw[->,>=latex] (Y-2) to (Z-1);
			\draw[->,>=latex] (Y-1) to (Z);

			\draw[->,>=latex] (X-2) -- (X-1);
			\draw[->,>=latex] (X-1) -- (X);
			\draw[->,>=latex] (Y-2) -- (Y-1);
			\draw[->,>=latex] (Y-1) -- (Y);
			\draw[->,>=latex] (Z-2) -- (Z-1);
			\draw[->,>=latex] (Z-1) -- (Z);
			
			\draw[<->,>=latex, dashed] (X-2) to [out=180,in=180, looseness=1] (Y-2);
			\draw[<->,>=latex, dashed] (X-1) to [out=30,in=-30, looseness=0.5] (Y-1);
			\draw[<->,>=latex, dashed] (X) to [out=0,in=0, looseness=1] (Y);
			\draw[<->,>=latex, dashed] (X-2) to (Y-1);
			\draw[<->,>=latex, dashed] (X-1) to (Y);
			\draw[<->,>=latex, dashed] (X-2) to [out=80,in=100, looseness=1] (X-1);
			\draw[<->,>=latex, dashed] (X-1) to [out=80,in=100, looseness=1] (X);
			\draw[<->,>=latex, dashed] (Y-2) to [out=-80,in=-100, looseness=1] (Y-1);
			\draw[<->,>=latex, dashed] (Y-1) to [out=-80,in=-100, looseness=1] (Y);

			\draw [dashed,>=latex] (X-2) to[left] (-2,2.4);
			\draw [dashed,>=latex] (Z-2) to[left] (-2,1.2);
			\draw [dashed,>=latex] (Y-2) to[left] (-2,0);
			\draw [dashed,>=latex] (X) to[right] (2,2.4);
			\draw [dashed,>=latex] (Z) to[right] (2,1.2);
			\draw [dashed,>=latex] (Y) to[right] (2,0);
		\end{tikzpicture}		
		\caption{Compatible FT-ADMG 2.}
		\label{fig:example2_FT-ADMG2}
	\end{subfigure}
	\caption{An SCG with two compatible FT-ADMGs. Each pair of red and blue vertices represents the macro query we are interested in.}
	\label{fig:SCG_FTADMG}
\end{figure*}
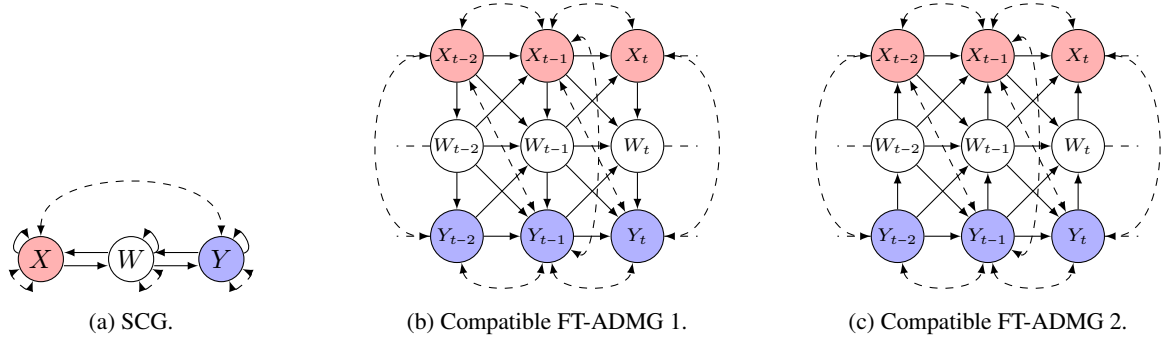

\section{Introduction}

In observational studies, causal graphs are vital for decision-making regarding interventions and can be useful for detecting independencies between variables. Several standard tools have been developed for fully specified causal graphs, typically represented as directed acyclic graphs (DAGs) or acyclic directed mixed graphs (ADMGs). For example, d-separation \citep{Pearl_1988} enables the detection of conditional independencies directly from the graph. Similarly, the do-calculus~\citep{Pearl_1995} allows for the derivation of total effect estimands from observational data. These results were also extended to directed mixed graphs (DMGs) \citep{Richardson_1997, Forre_2017, Forre_2018, Forre_2020}.
However, constructing a fully specified causal graph is challenging because it requires knowledge of the causal relations among all pairs of observed variables. This knowledge is often unavailable, particularly in complex, high-dimensional settings, thus limiting the applicability of causal inference theory and tools.
Therefore, recently there has been more interest in partially specified causal graphs~\citep{Perkovic_2020,Anand_2023,Ferreira_2024,Assaad_2024}.
An important type of partially specified graphs is the cluster-DMG which provides a coarser representation of causal relations through vertices that represent a cluster of variables (and can contain cycles). 
The partial specification of cluster-DMG arises from the fact that each vertex in an cluster-DMG represents a cluster of variables. 
A well-known special case of cluster-DMGs is the cluster-ADMG, which imposes an acyclicity assumption on the graph~\citep{Anand_2023}. Another notable special case of cluster-DMGs in dynamic systems is the summary causal graph (SCG), where each cluster represents one time series in spacio-temporal data or repeated measurements through time of the same variable in a longitudinal study, and all temporal information is omitted from the graph~\citep{Assaad_2022}.

In SCGs (or any cluster-DMG), there are two primary types of queries of interest. The first, called a micro query, involves considering one or a few specific variables (but not all) represented by a vertex in an SCG. The second, called a macro query, involves considering the whole cluster of variables represented by a vertex in an SCG. 
This paper focuses on two types of macro queries: macro conditional independencies and macro total effects. Specifically, we show that the d-separation criterion and the do-calculus respectively identify macro conditional independencies and macro total effects in SCGs. By addressing these macro queries, we aim to provide practical solutions for inferring causal effects in real-world applications where causal relations are only partially understood.
In particular, the concept of macro queries is crucial for public health strategies and epidemiological modeling. For example, the macro total effect can be used to understand the impact of the COVID-19 pandemic on the flu epidemic and vice versa \citep{Boelle_2020, Ferguson_2003}, shedding light on the broader impacts of interventions and natural immunity dynamics over extended periods.
Note that we assume having access to spacio-temporal data with full time series, i.e., where the coverage from the beginning of generative process is ensured. However, this does not significantly limit the applicability of our results as epidemiologists, for example, often rely on this assumption when analyzing the impact of the COVID-19 pandemic on concurrent epidemics~\cite{Feng2021}. While this assumption is not made explicit in such studies, it remains essential for ensuring the validity of such findings. Furthermore, presenting our results along with a clear articulation of this assumption establishes a solid foundation for future work to address this limitation, which is often overlooked in practical applications.

The remainder of the paper is organized as follows: Firstly, we formally presents SCGs as well as micro and macro queries.
Secondly, we show that d-separation is sound and complete for macro conditional independencies in SCGs.
Thirdly, we show that the do-calculus is sound and complete for macro total effects in SCGs and present a graphical characterization for the non-identifiability of these effects.
After that, we discuss related works. 
Finally, we conclude the paper while showing its limitations.

\section{Summary Causal Graphs and Macro Queries}
\label{sec:SCGs}

We suppose that in a dynamic system, observations are generated from a discrete-time dynamic structural causal model
(DTDSCM), an extension of structural causal models \citep{Pearl_2000} to dynamic systems. In the following we use the convention that uppercase letters represent variables and lowercase letters represent their values, moreover letters in blackboard bold represent sets.

\begin{definition}[Discrete-time dynamic structural causal model (DTDSCM)]
	\label{def:DTDSCM}
	A discrete-time dynamic structural causal model is a tuple $\mathcal{M}=(\mathbb{L}, \mathbb{V}, \mathbb{F},  \proba{\mathbb{l}})$,
	where $\mathbb{L} = \bigcup \{\mathbb{L}^{v^i_t} \mid i \in [1,d], t \in [t_0,t_{max}]\}$ is a set of exogenous variables, which cannot be observed but affect the rest of the model.
	$\mathbb{V} = \bigcup \{\mathbb{V}^i \mid i \in [1,d]\}$ such that $\forall i \in [1,d]$, $\mathbb{V}^{i} = \{V^{i}_{t} \mid t \in [t_0,t_{max}]\}$, is a set of endogenous variables, which are observed and every $V^i_t \in \mathbb{V}$ is functionally dependent on some subset of $\mathbb{L}^{v^i_t} \cup \mathbb{V}_{\leq t} \backslash \{V^i_t\}$ where $\mathbb{V}_{\leq t} = \{V^j_{t'} \mid j \in [1,d], t'\leq t\}$.
	$\mathbb{F}$ is a set of functions such that for all $V^i_t \in \mathbb{V}$, $f^{v^i_t}$ is a mapping from $\mathbb{L}^{v^i_t}$ and a subset of $\mathbb{V}_{\leq t} \backslash \{V^i_t\}$ to $V^i_t$.
	$\proba{\mathbb{l}}$ is a joint probability distribution over $\mathbb{L}$.
\end{definition}

We suppose that the direct causal relations $\mathbb{F}$ in a DTDSCM can be qualitatively represented by a full-time acyclic directed mixed graph (FT-ADMG), a special case of ADMGs~\citep{Richardson_2003}, where bidirected dashed arrows represent hidden confounding.

\begin{definition}[Full-Time Acyclic Directed Mixed Graph]
	\label{def:FT-ADMG}
	Consider a DTDSCM $\mathcal{M}$. The \emph{full-time acyclic directed mixed graph (FT-ADMG)} $\mathcal{G} = (\mathbb{V}, \mathbb{E})$ induced by $\mathcal{M}$ is defined in the following way:
	\begin{equation*}
		\begin{aligned}
			\mathbb{E}^{1} :=& \{X_{t-\gamma}\rightarrow Y_{t} \mid \forall Y_{t} \in \mathbb{V},~X_{t-\gamma} \in \mathbb{X} \subseteq \mathbb{V}_{\leq t} \backslash \{Y_t\} \\
            &\st Y_t := f^{y_t}(\mathbb{X}, \mathbb{L}^{y_t}) \text{ in } \mathcal{M}\},\\
            \mathbb{E}^{2} :=& \{X_{t-\gamma}\longdashleftrightarrow Y_{t} \mid \forall X_{t-\gamma}, Y_{t} \in \mathbb{V}\\
            &\st \notindep{\mathbb{L}^{x_{t-\gamma}}}{\mathbb{L}^{y_t}\}}{\probasymbol},
		\end{aligned}
	\end{equation*}
	where $\mathbb{E}=\mathbb{E}^{1}\cup \mathbb{E}^{2}$.
\end{definition}

In the remainder, we will use the notations of parents, children, ancestors and descendants for FT-ADMGs. Consider an FT-ADMG $\mathcal{G}=(\mathbb{V},\mathbb{E})$ for every vertex $Y_t \in \mathbb{V}$ we note:
$\parents{Y_t}{\mathcal{G}} = \{X_{t'} \in \mathbb{V} \mid X_{t'} \rightarrow Y_t\}$, $\children{Y_t}{\mathcal{G}} = \{X_{t'} \in \mathbb{V} \mid Y_t \rightarrow X_{t'}\}$, $\ancestors{Y_t}{\mathcal{G}} = \bigcup_{n\in \mathbb{N}}P_n$ where $P_0 = \{Y_t\}$ and $P_{k+1} = \bigcup_{X_{t'} \in P_k} \parents{X_{t'}}{\mathcal{G}}$, $\descendants{Y_t}{\mathcal{G}} = \bigcup_{n\in \mathbb{N}}C_n$ where $C_0 = \{Y_t\}$ and $C_{k+1} = \bigcup_{X_{t'} \in C_k} \children{X_{t'}}{\mathcal{G}}$.

In many domains, such as epidemiology, it is generally challenging for practitioners to validate, analyze, or even provide an FT-ADMG due to the difficulty in determining the temporal lag between a cause and its effect. However, practitioners can usually provide summary causal graphs, which are compact versions of FT-ADMG. These summary causal graphs represent the causal relations between the exposure, the disease and the different factors without specifying the temporal lags of these relations.

\begin{definition}[Summary Causal Graph with possible latent confounding]
	\label{def:SCG}
	Consider an FT-ADMG $\mathcal{G} = (\mathbb{V}, \mathbb{E})$. The \emph{summary causal graph (SCG)} $\mathcal{G}^{s} = (\mathbb{S}, \mathbb{E}^{{s}})$ compatible with $\mathcal{G}$ is defined in the following way:
	\begin{equation*}
		\begin{aligned}
			\mathbb{S} :=& \{V^i = (V^i_{t_0},\cdots,V^i_{t_{max}}) \mid \forall i \in [1, d]\},\\
           \mathbb{E}^{s1} :=& \{X\rightarrow Y \mid \forall X,Y \in \mathbb{S},~\exists t'\leq t\in [t_0,t_{max}] \\
        &\st X_{t'}\rightarrow Y_{t}\in\mathbb{E}\},\\
            \mathbb{E}^{s2} :=& \{X\longdashleftrightarrow Y \mid \forall X,Y \in \mathbb{S},~\exists t',t \in [t_0,t_{max}] \\ 
&\st X_{t'}\longdashleftrightarrow Y_{t}\in\mathbb{E}\}.        
\end{aligned}\\
\end{equation*}

where $\mathbb{E}^{s} = \mathbb{E}^{s1}\cup \mathbb{E}^{s2}.$
 
\end{definition}

The transformation from an FT-ADMG to an SCG can be made explicit using the natural transformation from $\mathbb{V}$ to $\mathbb{S}$, $\tau_0$ : 
$(v^1_{t_0},\cdots,v^1_{t_{max}},\cdots,v^d_{t_0},\cdots,v^d_{t_{max}}) \mapsto ((v^1_{t_0},\cdots,v^1_{t_{max}}),\cdots,(v^d_{t_0},\cdots,v^d_{t_{max}})).$
The abstraction of SCGs entails that, even though there is exactly one SCG compatible with a given FT-ADMG, there are in general several FT-ADMGs compatible with a given SCG. For example, we give in Figure~\ref{fig:SCG_FTADMG} an SCG with two of its compatible FT-ADMGs.
While SCGs, like cluster-DMGs, can contain cycles, the cycles in an SCG arise from the partial specificity of the graph.
Notice that, like cluster-DMGs, SCGs may have directed cycles and in particular two directed edges oriented in opposite directions, \ie, if in the FT-ADMG we have $X_{t'}\rightarrow Y_{t}$ and $Y_{t''}\rightarrow X_{t}$ then in the SCG we have $X\rightarrow Y$ and $Y\rightarrow X$ which we often write $X\rightleftarrows Y$.
However, unlike in cluster-DMGs, the cycles in SCGs necessarily arise from the partial specificity of the graph.
Moreover, if $\mathcal{G}^{s} = (\mathbb{S}, \mathbb{E}^{{s}})$ and $\mathcal{G} = (\mathbb{V}, \mathbb{E})$ are compatible, we abuse the notation $\forall V^i \in \mathbb{S}$,~$\mathbb{V}^i=\{V^i_{t} \mid t \in [t_0,t_{max}]\}$ given in Definition~\ref{def:DTDSCM} by writing $\forall V^i = Y \in \mathbb{S}$,~$\mathbb{V}^{Y}=\mathbb{V}^i$ and $\forall \mathbb{Y}\subseteq \mathbb{S}$,~$\mathbb{V}^{\mathbb{Y}}=\bigcup\limits_{Y \in \mathbb{Y}}\mathbb{V}^Y$. 

In the remainder, we will adopt the terminology of parents, children, ancestors, and descendants for SCGs, which are defined in a manner consistent with those used for FT-ADMGs, \ie, the sets of parents, children, ancestors, and descendants of a vertex $Y$ in an SCG $\mathcal{G}^s$ are respectively given by $\parents{Y}{\mathcal{G}^s}$, $\children{Y}{\mathcal{G}^s}$, $\ancestors{Y}{\mathcal{G}^s}$, and $\descendants{Y}{\mathcal{G}^s}$. Furthermore, a strongly connected component of $Y$ in $\mathcal{G}^s$ denoted as $\scc{V}{\mathcal{G}^s}$ is the set $\ancestors{Y}{\mathcal{G}^s} \cap \descendants{Y}{\mathcal{G}^s}$.

In this paper, we focus on two distinct queries in SCGs: one that pertains to finding conditional independencies and another that infers total effects. Each of these queries can be categorized into two types: macro query and micro query. 
Below, we define micro conditional independency and macro conditional independency.

\begin{definition}[Micro conditional independency]
	\label{def:micro-indep}
	A micro independence in the FT-ADMG $\mathcal{G} = (\mathbb{V}, \mathbb{E})$ is a conditional independence between two disjoint sets of temporal variables $\mathbb{X}, \mathbb{Y} \subseteq \mathbb{V}$ conditioned on any subset of temporal variables $\mathbb{W} \subseteq \mathbb{V}\backslash (\mathbb{X}, \mathbb{Y})$.
\end{definition}

\begin{definition}[Macro conditional independency]
	\label{def:macro-indep}
	A macro independence in the FT-ADMG $\mathcal{G} = (\mathbb{V}, \mathbb{E})$ compatible with the SCG $\mathcal{G}^{s} = (\mathbb{S}, \mathbb{E}^{{s}})$ is a conditional independence between two sets of temporal variables of the form $\mathbb{V}^X$ and $\mathbb{V}^Y$ for $X, Y \in \mathbb{S}$ conditioned on a subset of temporal variables of the form $\mathbb{V}^\mathbb{W}$ for $\mathbb{W} \subseteq \mathbb{S}\backslash\{X,Y\}$.
\end{definition}

For example, in Figure~\ref{fig:SCG_FTADMG}, we are interested to know if there is a macro conditional independence between $X$ and $Y$ given $W$ in the SCG which means to know if $\{X_{t} \mid t \in [t_0, t_{max}]\}$ is independent of $\{Y_t \mid t \in [t_0, t_{max}]\}$ given $\{W_t \mid t \in [t_0, t_{max}]\}$. In contrast, a micro conditional independence query would be of the form: "is $X_t$ independent of $Y_{t-1}$ given $W_{t-1}$?".

Note that there exists a fundamental tool called d-separation~\citep{Pearl_1988} that serves for deriving conditional independence properties from causal graphs. 
In the following the d-separation of two subsets $\mathbb{X}$ and $\mathbb{Y}$ conditionally on a third one $\mathbb{W}$ in a graph $\mathcal{G}$ is denoted as $\dsepc{\mathbb{X}}{\mathbb{Y}}{\mathbb{W}}{\mathcal{G}}$.

Now, we define micro total effect and macro total effect.
\begin{definition}[Micro total effect]
	\label{def:micro-total_effects}
	A micro total effects in the FT-ADMG $\mathcal{G} = (\mathbb{V}, \mathbb{E})$ is a total effect from a set of temporal variables $\mathbb{X}$ to another set of temporal variables $\mathbb{Y}$ where $\mathbb{X}, \mathbb{Y} \subseteq \mathbb{V}$, denoted as $\probac{\mathbb{Y} = \mathbb{y}}{\interv{\mathbb{X} = \mathbb{x}}}$.
\end{definition}

\begin{definition}[Macro total effect]
	\label{def:macro-total_effects}
	A macro total effect in the FT-ADMG $\mathcal{G} = (\mathbb{V}, \mathbb{E})$ compatible with the SCG $\mathcal{G}^{s} = (\mathbb{S}, \mathbb{E}^{{s}})$ is a total effect from $\mathbb{V}^X$ to $\mathbb{V}^Y$ where $X, Y \in \mathbb{S}$, denoted as $\probac{\mathbb{V}^Y = \mathbb{v}^y}{\interv{\mathbb{V}^X = \mathbb{v}^x}}$.
\end{definition}

By a slight abuse of notation, in the remainder we will denote macro total effects as $\probac{\mathbb{v}^y}{\interv{\mathbb{v}^x}}$.
Notice that macro queries are a special case of micro queries.

Similarly, we will also define the micro and macro interventions in the FT-ADMG respectively as $\mathbb{I}^m = \{\interv{V^i_t = v^i_t} \mid i \in [1,d], t \in [t_0,t_{max}]\}$ and $\mathbb{I}^M = \{\interv{\mathbb{V}^i = \mathbb{v}^i} \mid i \in [1,d]\}$.
Consider also the complete set of interventions in the SCG $\mathbb{I}^s = \{\interv{Y = y} \mid Y \in \mathbb{S}\}$ which corresponds to the macro interventions from the point of view of the SCG.


\textbf{Remarks}
\begin{itemize}
    \item 
Previous works considering micro queries~\citep{Ferreira_2024,Assaad_2024} usually assume stationarity, because it permits the division of a single multivariate time series into multiple instances, thereby facilitating estimation when each time-point represents a single observation. However, when focusing solely on macro queries, this assumption loses its relevance because stationarity offers no distinct advantage. Instead, it is assumed that there are multiple observations available for each temporal variable, i.e., spacio-temporal data, because estimating the macro total effect from just a single observation per variable is infeasible. 
\item This work addresses macro conditional independencies and macro total effects. However, it is worth noting that the results presented for macro total effects, particularly in Section "Idendification of Macro Total Effects", are also applicable to macro controlled direct effects~\citep{Pearl_2000}.
\end{itemize}

\section{Identification of Macro Conditional Independencies}
\label{sec:indep}
In this section, we show that d-separation is sound and complete in SCGs for macro queries.
Firstly, since an FT-ADMG is an ADMG, the standard definition of d-separation \citep{Pearl_1998} is directly applicable for FT-ADMG.
This definition was introduced for ADMGs and was extended to DMGs in \citep{Forre_2018} and to cluster-ADMGs in \citep{Anand_2023}.
But it turned out that it is readily extendable to SCGs, which are a special case of cluster-DMGs.

\begin{definition}[d-separation in SCGs]
	\label{def:d-separation}
	In an SCG $\mathcal{G}^s=(\mathbb{S},\mathbb{E}^s)$, a path $\pi=\langle V^{p_1},\cdots,V^{p_n}\rangle$ is said to be blocked by a set of vertices $\mathbb{W}\subseteq\mathbb{S}$ if:
	\begin{enumerate}
		\item $\exists 1 < i < n \st V^{p_{i-1}}\starbarstar V^{p_i}\rightarrow V^{p_{i+1}}$ or $V^{p_{i-1}}\leftarrow V^{p_i}\starbarstar V^{p_{i+1}}$ and $V^{p_i}\in\mathbb{W}$, or
		\item $\exists 1 < i < n \st V^{p_{i-1}}\stararrow V^{p_i}\arrowstar V^{p_{i+1}}$ and $\descendants{V^{p_i}}{\mathcal{G}^s}\cap\mathbb{W}=\emptyset$.
	\end{enumerate}
	where $\stararrow$ represents $\rightarrow$ or $\longdashleftrightarrow$, $\arrowstar$ represents $\leftarrow$ or $\longdashleftrightarrow$, and $\starbarstar$ represents any of the three arrow type $\rightarrow$, $\leftarrow$ or $\longdashleftrightarrow$.
	A path which is not blocked is said to be active. A set $\mathbb{W}\subseteq \mathbb{S}$ is said to d-separate two sets of variables $\mathbb{X},~\mathbb{Y}\subseteq \mathbb{S}$ if it blocks every path from a variable of $\mathbb{X}$ to a variable of $\mathbb{Y}$.
\end{definition}

Before providing the main results of this section, we define the notion of walks in graphs and show how it is related to the notion of paths. This notion is important in SCGs since they contain cycles and the relation between paths and walks will be useful in the proof of the soundness of d-separation.

A walk is a sequence of consecutively adjacent vertices and contrary to paths, walks may contain repeating vertices. The definition of blocked path given in Definition~\ref{def:d-separation} is readily extendable to walks.

\begin{restatable}{property}{propertyone}{(Active Walks)}
	\label{prop:active_walks}
	Let $\mathcal{G}=(\mathbb{V},\mathbb{E})$ be an FT-ADMG, $\mathbb{W}\subseteq\mathbb{V}$ and $\tilde{\pi}=\langle V^1,\cdots,V^n\rangle$ be a walk.
	If $\tilde{\pi}$ is $\mathbb{W}$-active then there exist a path $\pi$ from $V^1$ to $V^n$ which is $\mathbb{W}$-active.
\end{restatable}

\begin{proof}
    In Appendix.
\end{proof}

The following theorem shows that the d-separation as presented in Definition~\ref{def:d-separation} is sound in SCGs.

\begin{theorem}[Soundness of d-separation in SCGs]
	\label{theorem:soundness_d-sep}
	Let $\mathcal{G}^s=(\mathbb{S},\mathbb{E}^s)$ be an SCG and $\mathbb{X},\mathbb{Y},\mathbb{W}\subseteq \mathbb{S}$. If $\mathbb{X}$ and $\mathbb{Y}$ are d-separated by $\mathbb{W}$ in $\mathcal{G}^s$ then, in any compatible FT-ADMG $\mathcal{G}=(\mathbb{V},\mathbb{E})$, $\mathbb{V}^{\mathbb{X}}$ and $\mathbb{V}^{\mathbb{Y}}$ are d-separated by $\mathbb{V}^{\mathbb{W}}$.
\end{theorem}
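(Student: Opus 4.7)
The plan is to prove the contrapositive: starting from a compatible FT-ADMG $\mathcal{G}=(\mathbb{V},\mathbb{E})$ in which $\mathbb{V}^{\mathbb{X}}$ and $\mathbb{V}^{\mathbb{Y}}$ are not d-separated by $\mathbb{V}^{\mathbb{W}}$, I would exhibit an active path in $\mathcal{G}^s$ between $\mathbb{X}$ and $\mathbb{Y}$ given $\mathbb{W}$. To that end, fix a $\mathbb{V}^{\mathbb{W}}$-active path $\pi=\langle V^{p_1}_{t_1},\ldots,V^{p_n}_{t_n}\rangle$ in $\mathcal{G}$, with $V^{p_1}_{t_1}\in\mathbb{V}^{\mathbb{X}}$ and $V^{p_n}_{t_n}\in\mathbb{V}^{\mathbb{Y}}$.

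First, I would project $\pi$ through the natural map $\tau_0$ to obtain the sequence $\tilde{\pi}=\langle V^{p_1},\ldots,V^{p_n}\rangle$ in $\mathcal{G}^s$. By Definition~\ref{def:SCG}, every directed edge (respectively every bidirected dashed edge) of $\pi$ lifts to an edge of the same type in $\mathbb{E}^{s}$, so $\tilde{\pi}$ is a walk in $\mathcal{G}^s$ and the collider/non-collider configuration at each internal junction is preserved edge-by-edge. Next, I would verify that $\tilde{\pi}$ is $\mathbb{W}$-active: at a non-collider position $i$, $V^{p_i}_{t_i}\notin\mathbb{V}^{\mathbb{W}}$ forces $V^{p_i}\notin\mathbb{W}$ since $\mathbb{V}^{\mathbb{W}}$ contains every temporal instance of every cluster in $\mathbb{W}$; at a collider position $i$, there is some descendant $V^{p_j}_{t_j}\in\mathbb{V}^{\mathbb{W}}$ of $V^{p_i}_{t_i}$ in $\mathcal{G}$, and since any directed walk in $\mathcal{G}$ projects via $\tau_0$ to a directed walk in $\mathcal{G}^s$, the cluster $V^{p_j}$ lies in $\descendants{V^{p_i}}{\mathcal{G}^s}\cap\mathbb{W}$.

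Finally, I would invoke Property~\ref{prop:active_walks} in $\mathcal{G}^s$ to extract from $\tilde{\pi}$ a $\mathbb{W}$-active path from $V^{p_1}\in\mathbb{X}$ to $V^{p_n}\in\mathbb{Y}$ (the endpoints lie in distinct clusters because $\mathbb{V}^{\mathbb{X}}$ and $\mathbb{V}^{\mathbb{Y}}$ are disjoint), contradicting the assumption that $\mathbb{X}$ and $\mathbb{Y}$ are d-separated by $\mathbb{W}$ in $\mathcal{G}^s$. The main obstacle I anticipate is that Property~\ref{prop:active_walks} is stated only for FT-ADMGs, whereas here it must be applied to $\mathcal{G}^s$, which can contain directed cycles and self-loops (notably the $X\rightleftarrows Y$ patterns) arising from the abstraction; I would argue that the walk-to-path reduction underlying Property~\ref{prop:active_walks} is purely graph-theoretic (repeatedly shortcut any repeated vertex while checking that the resulting junction is still a non-blocked collider or non-blocked non-collider) and therefore transfers verbatim to arbitrary directed mixed graphs, so the cyclic structure of SCGs does not introduce an essential difficulty.
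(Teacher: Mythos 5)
Your proof is correct and follows essentially the same route as the paper's: project the $\mathbb{V}^{\mathbb{W}}$-active FT-ADMG path to a walk in the SCG, check triple by triple that (non-)collider status and blocking transfer between the two levels, and invoke Property~\ref{prop:active_walks} to pass between active walks and active paths in the SCG --- the paper runs the identical argument in contrapositive form, transferring blockedness downward rather than activeness upward. Your remark that Property~\ref{prop:active_walks} must be applied to the possibly cyclic SCG rather than to an FT-ADMG identifies a subtlety on which the paper's own proof silently relies as well, and your justification that the walk-to-path reduction is purely graph-theoretic and survives cycles is sound.
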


\begin{proof}
	Suppose $\mathbb{X}$ and $\mathbb{Y}$ are d-separated by $\mathbb{W}$ in $\mathcal{G}^s$ and there exists a compatible FT-ADMG $\mathcal{G}=(\mathbb{V},\mathbb{E})$ and a path $\pi=\langle V^{p_1}_{t^{p_1}},\cdots,V^{p_n}_{t^{p_n}}\rangle$ from $V^{p_1}_{t^{p_1}} \in \mathbb{V}^{\mathbb{X}}$ to $V^{p_n}_{t^{p_n}} \in \mathbb{V}^{\mathbb{Y}}$ which is not blocked by $\mathbb{V}^{\mathbb{W}}$.
	The walk $\tilde{\pi}^s=\langle V^{p_1},\cdots,V^{p_n}\rangle$ is in $\mathcal{G}^s$ with $V^{p_1}\in\mathbb{X}$ and $V^{p_n}\in\mathbb{Y}$.
	Since $\mathbb{X}$ and $\mathbb{Y}$ are d-separated by $\mathbb{W}$ and using the contraposition of Property~\ref{prop:active_walks}, $\mathbb{W}$ blocks $\tilde{\pi}^s$ and thus there exists $1<i<n$ such that $\langle V^{p_{i-1}},V^{p_{i}},V^{p_{i+1}} \rangle$ is $\mathbb{W}$-blocked.

	If $V^{p_{i-1}}\starbarstar V^{p_i}\rightarrow V^{p_{i+1}}$ (or symmetrically $V^{p_{i-1}}\leftarrow V^{p_i}\starbarstar V^{p_{i+1}}$) and $V^{p_i}\in\mathbb{W}$ then $V^{p_{i-1}}_{t^{p_{i-1}}} \starbarstar V^{p_i}_{t^{p_{i}}} \rightarrow V^{p_{i+1}}_{t^{p_{i+1}}}$ and $\forall t\in[t_0,t_{max}],~V^{p_i}_t \in \mathbb{V}^{\mathbb{W}}$ and in particular $V^{p_i}_{t^{p_i}} \in \mathbb{V}^{\mathbb{W}}$.
	Therefore, $\pi$ is blocked by $\mathbb{V}^{\mathbb{W}}$.

	Otherwise, $V^{p_{i-1}} \stararrow V^{p_i} \arrowstar V^{p_{i+1}}$ and $\descendants{V^{p_i}}{\mathcal{G}^s} \cap \mathbb{W} = \emptyset$ so $V^{p_{i-1}}_{t^{p_{i-1}}} \stararrow V^{p_i}_{t^{p_{i}}} \arrowstar V^{p_{i+1}}_{t^{p_{i+1}}}$ and $\descendants{V^{p_i}_{t^{p_i}}}{\mathcal{G}} \cap \mathbb{V}^{\mathbb{W}} \subseteq \mathbb{V}^{Desc(V^{p_i})} \cap \mathbb{V}^{\mathbb{W}} = \{\mathbb{V}^j \mid V^j \in \descendants{V^{p_i}}{\mathcal{G}^s} \cap \mathbb{W}\} = \{\mathbb{V}^j \mid V^j \in \emptyset\} = \emptyset$.
	Therefore, $\pi$ is blocked by $\mathbb{V}^{\mathbb{W}}$ which contradicts the initial assumption.
	In conclusion, the d-separation criterion in SCGs is sound.
\end{proof}

The following theorem shows that the d-separation as presented in Definition~\ref{def:d-separation} is complete in SCGs.

\begin{theorem}[Completeness of d-separation in SCGs]
	\label{theorem:completeness_d-sep}
	Let $\mathcal{G}^s=(\mathbb{S},\mathbb{E}^s)$ be an SCG and $\mathbb{X},\mathbb{Y},\mathbb{W}\subseteq \mathbb{S}$. If $\mathbb{X}$ and $\mathbb{Y}$ are not d-separated by $\mathbb{W}$ in $\mathcal{G}^s$ then, there exists a compatible FT-ADMG $\mathcal{G}=(\mathbb{V},\mathbb{E})$ such that $\mathbb{V}^{\mathbb{X}}$ and $\mathbb{V}^{\mathbb{Y}}$ are not d-separated by $\mathbb{V}^{\mathbb{W}}$.
\end{theorem}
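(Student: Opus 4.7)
Assume $\mathbb{X}$ and $\mathbb{Y}$ are not d-separated by $\mathbb{W}$ in $\mathcal{G}^s$, so there exists a $\mathbb{W}$-active path $\pi^s=\langle V^{p_1},\ldots,V^{p_n}\rangle$ in $\mathcal{G}^s$ with $V^{p_1}\in\mathbb{X}$ and $V^{p_n}\in\mathbb{Y}$. I will explicitly construct a compatible FT-ADMG $\mathcal{G}=(\mathbb{V},\mathbb{E})$ with a time horizon $[t_0,t_{max}]$ chosen sufficiently large, together with a lift $\langle V^{p_1}_{t^{p_1}},\ldots,V^{p_n}_{t^{p_n}}\rangle$ of $\pi^s$ that is a $\mathbb{V}^{\mathbb{W}}$-active path in $\mathcal{G}$ from $\mathbb{V}^{\mathbb{X}}$ to $\mathbb{V}^{\mathbb{Y}}$. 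The single global constraint enforced throughout the construction is that every directed edge in $\mathbb{E}$ is strictly forward in time; this immediately guarantees directed acyclicity.

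\textbf{Construction.} Walk along $\pi^s$ and assign pairwise distinct times respecting edge orientations: for $V^{p_{i-1}}\rightarrow V^{p_i}$ choose $t^{p_i}>t^{p_{i-1}}$; for $V^{p_{i-1}}\leftarrow V^{p_i}$ choose $t^{p_i}<t^{p_{i-1}}$; for $V^{p_{i-1}}\longdashleftrightarrow V^{p_i}$ choose any $t^{p_i}\neq t^{p_{i-1}}$. Taking $t_{max}$ larger than the length of $\pi^s$ plus appropriate slack accommodates all required shifts. Add to $\mathbb{E}$ the lifted edge for each consecutive pair of $\pi^s$. Next, for every collider $V^{p_i}$ on $\pi^s$, the active-path hypothesis gives some $V^{c_i}\in\mathbb{W}\cap\descendants{V^{p_i}}{\mathcal{G}^s}$ together with a directed witness path from $V^{p_i}$ to $V^{c_i}$ in $\mathcal{G}^s$; lift this witness path into $\mathcal{G}$ using strictly increasing times starting at $t^{p_i}$, again adding the lifted edges to $\mathbb{E}$. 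Finally, for every edge of $\mathbb{E}^s$ not yet realized in $\mathbb{E}$, add a single representative lifted edge, placed in a fresh time window disjoint from those used above, still respecting strict forward time on directed edges. Every SCG edge is now realized by at least one FT-ADMG edge, and no FT-ADMG edge projects outside $\mathbb{E}^s$, so $\mathcal{G}$ is compatible with $\mathcal{G}^s$; since every directed edge is strictly forward in time, $\mathcal{G}$ is acyclic.

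\textbf{Verification and main obstacle.} The lifted path is $\mathbb{V}^{\mathbb{W}}$-active in $\mathcal{G}$: at each non-collider $V^{p_i}$ of $\pi^s$ we have $V^{p_i}\notin\mathbb{W}$, hence $V^{p_i}_{t^{p_i}}\notin\mathbb{V}^{\mathbb{W}}$; at each collider $V^{p_i}$ the lifted witness path provides a descendant $V^{c_i}_{s}\in\descendants{V^{p_i}_{t^{p_i}}}{\mathcal{G}}\cap\mathbb{V}^{\mathbb{W}}$. The time indices are pairwise distinct, so the lift is a simple path from $V^{p_1}_{t^{p_1}}\in\mathbb{V}^{\mathbb{X}}$ to $V^{p_n}_{t^{p_n}}\in\mathbb{V}^{\mathbb{Y}}$, witnessing that $\mathbb{V}^{\mathbb{X}}$ and $\mathbb{V}^{\mathbb{Y}}$ are not d-separated by $\mathbb{V}^{\mathbb{W}}$ in $\mathcal{G}$. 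I expect the main difficulty to be purely bookkeeping: one must argue that the time assignments for the main path, the collider witness paths, and the representative edges for the rest of $\mathbb{E}^s$ can all be made globally consistent while keeping every directed edge strictly forward in time, particularly when $\mathcal{G}^s$ contains long cycles (which can only be realized in $\mathcal{G}$ through carefully staggered time windows). Reserving pairwise disjoint time windows for the various lifted fragments inside a sufficiently large $[t_0,t_{max}]$ handles this cleanly.
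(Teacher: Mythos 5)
Your proof is correct, and it establishes the theorem by the same high-level strategy as the paper --- take the $\mathbb{W}$-active path $\pi^s$ in $\mathcal{G}^s$ and build a compatible FT-ADMG in which a lift of $\pi^s$ is $\mathbb{V}^{\mathbb{W}}$-active --- but the construction itself is genuinely different. The paper avoids all of your bookkeeping by a uniform global choice: every edge of $\mathbb{E}^s$ that is used by $\pi^s$ is instantiated \emph{contemporaneously} ($X_t \rightarrow Y_t$, resp.\ $X_t \longdashleftrightarrow Y_t$, for every $t$), while every other edge of $\mathbb{E}^s$ is instantiated with lag one ($X_t \rightarrow Y_{t+1}$). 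Acyclicity then follows because the only instantaneous directed edges are those along the simple path $\pi^s$, and compatibility is immediate since every SCG edge is realized at every time step. The payoff is that the lifted path $\langle V^{p_1}_{t_0},\dots,V^{p_n}_{t_0}\rangle$ lives entirely in the slice $t_0$ with exactly the collider pattern of $\pi^s$, and the collider condition comes for free: any directed SCG-path from a collider to a vertex of $\mathbb{W}$ lifts automatically (via the lag-0/lag-1 copies present at every time step) to a directed path into $\mathbb{V}^{\mathbb{W}}$, with no need to hand-pick and lift witness descendant paths as you do. Your staggered-time construction buys you strictly forward-in-time directed edges (no contemporaneous causation) at the cost of the explicit witness-path lifting and the disjoint-time-window argument; both versions implicitly require the horizon $[t_0,t_{max}]$ to be long enough (yours to accommodate the staggering and the witness paths, the paper's to accommodate the lag-1 descent to $\mathbb{W}$), and since compatibility does not pin down $t_{max}$, choosing it large is legitimate in either case.
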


\begin{proof}
	Suppose $\mathbb{X}$ and $\mathbb{Y}$ are not d-separated by $\mathbb{W}$ in $\mathcal{G}^s$.
	Since $\mathbb{X}$ and $\mathbb{Y}$ are not d-separated by $\mathbb{W}$ in $\mathcal{G}^s$ there exists an $\mathbb{W}$-active path $\pi^s=\langle V^{p_1},\cdots,V^{p_n}\rangle$ with $V^{p_1}\in\mathbb{X}$ and $V^{p_n}\in\mathbb{Y}$.
	From the SCG $\mathcal{G}^s=(\mathbb{S},\mathbb{E}^s)$ one can build a compatible FT-ADMG $\mathcal{G}=(\mathbb{V},\mathbb{E})$ in the following way:
	\begin{equation*}
		\begin{aligned}
			\mathbb{V} :=& \mathbb{V}^{\mathbb{S}}&\\
			\mathbb{E}^{1}_{\pi^s} :=& \{X_{t} \rightarrow Y_{t} \mid \forall t\in[t_0,t_{max}], \forall X \rightarrow Y \in \mathbb{E}^s &\\
			&\st \langle X \rightarrow Y \rangle \subseteq \pi^s \text{ or }\langle Y \leftarrow X \rangle \subseteq \pi^s\}&\\
			\mathbb{E}^{1}_{\bar{\pi}^s} :=& \{X_{t} \rightarrow Y_{t+1} \mid \forall t\in[t_0,t_{max}[, \forall X \rightarrow Y \in \mathbb{E}^s &\\
			&\st \langle X \rightarrow Y \rangle \not\subseteq \pi^s \text{ and }\langle Y \leftarrow X \rangle \not\subseteq \pi^s\}&\\
			\mathbb{E}^{2}_{\pi^s} :=& \{X_{t} \longdashleftrightarrow Y_{t} \mid \forall t\in[t_0,t_{max}], \forall X \longdashleftrightarrow Y \in \mathbb{E}^s &\\
			&\st \langle X \longdashleftrightarrow Y \rangle \subseteq \pi^s\text{ or }\langle Y \longdashleftrightarrow X \rangle \subseteq \pi^s\}&\\
			\mathbb{E}^{2}_{\bar{\pi}^s} :=& \{X_{t} \longdashleftrightarrow Y_{t+1} \mid \forall t\in[t_0,t_{max}[, \forall X \longdashleftrightarrow Y \in \mathbb{E}^s &\\
			&\st \langle X \longdashleftrightarrow Y \rangle \not\subseteq \pi^s\text{ and }\langle Y \longdashleftrightarrow X \rangle \not\subseteq \pi^s\}&\\
			\mathbb{E} :=& \mathbb{E}^{1}_{\pi^s} \cup \mathbb{E}^{1}_{\bar{\pi}^s} \cup \mathbb{E}^{2}_{\pi^s} \cup \mathbb{E}^{2}_{\bar{\pi}^s}
		\end{aligned}
	\end{equation*}
	Notice that $\mathcal{G}$ is in indeed acyclic, contains no edges going back in time and is compatible with the SCG $\mathcal{G}^s$.
	Moreover, $\mathcal{G}$ contains the path $\pi=\langle V^{p_1}_{t_0},\cdots,V^{p_n}_{t_0}\rangle$ which is necessarily $\mathbb{V}^{\mathbb{W}}$-active since $\pi^s$ is $\mathbb{W}$-active.
	In conclusion, the d-separation criterion in SCGs is complete.
\end{proof}

The findings of this section establish that identifying a d-separation in SCGs guarantees a macro-level d-separation in all compatible FT-ADMGs. This is crucial because, as shown in \citealt{Pearl_2000}, a d-separation in an ADMG implies a conditional independence in the underlying probability distribution. By extending the applicability of d-separation to SCGs, this result allows researchers to infer macro-level conditional independencies even when dealing with partially specified graphs that omit specific temporal information. This is particularly valuable in constraint-based causal discovery when the interest is to uncover the structure of the SCG without uncovering an FT-ADMG.


\section{Identification of Macro Total Effects}
\label{sec:total_effect}

\begin{figure}[t!]
	\centering
	\begin{subfigure}{.09\textwidth}
		\centering
		\begin{tikzpicture}[{black, circle, draw, inner sep=0}]
			\tikzset{nodes={draw,rounded corners},minimum height=0.6cm,minimum width=0.6cm}	
			\tikzset{anomalous/.append style={fill=easyorange}}
			\tikzset{rc/.append style={fill=easyorange}}
			
			\node (W) at (0,1.2) {$W$} ;
			\node[fill=red!30] (X) at (0,0) {$X$} ;
			\node[fill=blue!30] (Y) at (1.2,0) {$Y$};
			
			\draw [->,>=latex] (X) -- (Y);
			\begin{scope}[transform canvas={xshift=-.25em}]
				\draw [->,>=latex] (X) -- (W);
			\end{scope}
			\begin{scope}[transform canvas={xshift=.25em}]
				\draw [<-,>=latex] (X) -- (W);
			\end{scope}
			
			\draw[->,>=latex] (W) to [out=180,in=135, looseness=2] (W);
			\draw[->,>=latex] (X) to [out=180,in=135, looseness=2] (X);
			\draw[->,>=latex] (Y) to [out=15,in=60, looseness=2] (Y);
			
			\draw[<->,>=latex, dashed] (X) to [out=-155,in=-110, looseness=2] (X);
			\draw[<->,>=latex, dashed] (Y) to [out=-25,in=-70, looseness=2] (Y);
			\draw[<->,>=latex, dashed] (W) to [out=-15,in=-60, looseness=2] (W);	
		\end{tikzpicture}
		\caption{\centering}
		\label{fig:identifiable:a}
	\end{subfigure}
	\hfill  
	\begin{subfigure}{.15\textwidth}
		\centering
		\begin{tikzpicture}[{black, circle, draw, inner sep=0}]
			\tikzset{nodes={draw,rounded corners},minimum height=0.6cm,minimum width=0.6cm}	
			\tikzset{latent/.append style={white, fill=black}}
			
			\node[fill=red!30] (X) at (0,0) {$X$} ;
			\node[fill=blue!30] (Y) at (2.4,0) {$Y$};
			\node (Z) at (1.2,0) {$W$};

			\draw [->,>=latex,] (X) -- (Z);

			\draw[->,>=latex] (Z) -- (Y);

			\draw[<->,>=latex, dashed] (X) to [out=90,in=90, looseness=1] (Y);
			
			\draw[->,>=latex] (X) to [out=165,in=120, looseness=2] (X);
			\draw[->,>=latex] (Y) to [out=15,in=60, looseness=2] (Y);
			\draw[->,>=latex] (Z) to [out=15,in=60, looseness=2] (Z);

			\draw[<->,>=latex, dashed] (X) to [out=-155,in=-110, looseness=2] (X);
			\draw[<->,>=latex, dashed] (Y) to [out=-25,in=-70, looseness=2] (Y);
			\draw[<->,>=latex, dashed] (Z) to [out=-25,in=-70, looseness=2] (Z);			
			
		\end{tikzpicture}
		\caption{}
		\label{fig:identifiable:b}
	\end{subfigure}
	\hfill 
	\begin{subfigure}{.16\textwidth}
		\centering
		\begin{tikzpicture}[{black, circle, draw, inner sep=0}]
			\tikzset{nodes={draw,rounded corners},minimum height=0.6cm,minimum width=0.6cm}	
			\tikzset{latent/.append style={white, fill=black}}
			
			\node[fill=red!30] (X) at (0,0) {$X$} ;
			\node (Z) at (2.4,0) {$Z$};
			\node (W) at (1.2,0) {$W$};
			\node[fill=blue!30] (Y) at (1.8,-1.2) {$Y$};

			\draw [->,>=latex,] (X) -- (Y);
			\draw [->,>=latex,] (X) -- (W);

			\draw[->,>=latex] (W) -- (Z);
			\draw[->,>=latex] (W) -- (Y);

			\draw[<->,>=latex, dashed] (X) to [out=90,in=90, looseness=1] (Z);
			\draw[<->,>=latex, dashed] (W) to [out=-15,in=90, looseness=1] (Y);
			
			\draw[->,>=latex] (X) to [out=165,in=120, looseness=2] (X);
			\draw[->,>=latex] (Y) to [out=15,in=60, looseness=2] (Y);
			\draw[->,>=latex] (W) to [out=15,in=60, looseness=2] (W);
			\draw[->,>=latex] (Z) to [out=15,in=60, looseness=2] (Z);
			
			\draw[<->,>=latex, dashed] (X) to [out=-155,in=-110, looseness=2] (X);
			\draw[<->,>=latex, dashed] (Y) to [out=-25,in=-70, looseness=2] (Y);
			\draw[<->,>=latex, dashed] (W) to [out=-155,in=-110, looseness=2] (W);
			\draw[<->,>=latex, dashed] (Z) to [out=-25,in=-70, looseness=2] (Z);			
		\end{tikzpicture}
		\caption{}
		\label{fig:identifiable:c}
	\end{subfigure}
	\caption{SCGs with identifiable macro total effects. Each pair of red and blue vertices represents the total effect we are interested in.}
	\label{fig:identifiable}
\end{figure}
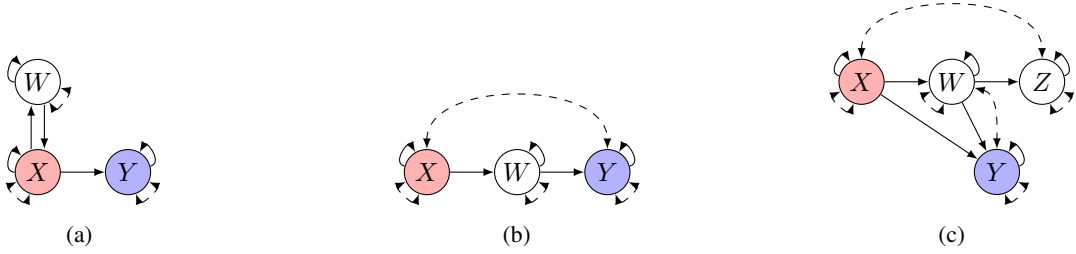

The do-calculus initially introduced in \citealt{Pearl_1995} is an important tool of causal inference that consists of three rules. It allows to express, whenever it is possible, queries under interventions, \ie, that contains a $\interv{\cdot}$ operator, as queries that can be computed from positive observational\footnote{As in \citealt{Pearl_1995}, we assume a strict positive distribution but this assumption can be relaxed to weaker positivity conditions tailored to the specific query expressed solely in terms of observational distributions  as shown in ~\citealt{Hwang_2024}.} distribution, \ie, that does not contain a $\interv{\cdot}$ operator. In such cases, it is said that the query containing the $\interv{\cdot}$ is \textit{identifiable}.
The do-calculus was initially introduced for ADMGs so it is not easily extendable to cyclic graphs.
But it turned out that it is readily extendable to SCGs (containing cycles).

Firstly, we define the notion of mutilated graphs~\citep{Pearl_2000}. Consider a causal graph $\mathcal{G}=(\mathbb{V},\mathbb{E})$ and $\mathbb{A},\mathbb{B}\subseteq\mathbb{V}$, a mutilated graph denoted by $\mathcal{G}_{\bar{\mathbb{A}}\underaccent{\bar}{\mathbb{B}}}$ is the graph obtained by removing all edges coming in $\mathbb{A}$ and all edges coming out of $\mathbb{B}$.
In the following, we introduce a property of compatibility between mutilated graphs that will be useful for proving the soundness and the completeness of the do-calculus in SCGs.

\begin{restatable}{property}{propertytwo}{(Compatibility of Mutilated Graphs)}
	\label{prop:mutilated_graphs}
	Let $\mathcal{G}=(\mathbb{V},\mathbb{E})$ be an FT-ADMG, $\mathcal{G}^s=(\mathbb{S},\mathbb{E}^s)$ its compatible SCG and $\mathbb{A},\mathbb{B}\subseteq\mathbb{S}$.
	The mutilated graph $\mathcal{G}^ s_{\bar{\mathbb{A}}\underaccent{\bar}{\mathbb{B}}}$ is an SCG compatible with the mutilated FT-ADMG $\mathcal{G}_{\bar{\mathbb{V}^{\mathbb{A}}}\underaccent{\bar}{\mathbb{V}^{\mathbb{B}}}}$.
\end{restatable}

\begin{proof}
    In Appendix.
\end{proof}

Using the notion of mutilated graphs and d-separation we show that the do-calculus is applicable to SCGs.

\begin{theorem}[Soundness of the do-calculus]
	\label{theorem:soundness_do-calculus}
	Let $\mathcal{G}^s=(\mathbb{S},\mathbb{E}^s)$ be an SCG and $\mathbb{X},\mathbb{Y},\mathbb{U},\mathbb{W}\subseteq\mathbb{S}$ be disjoint subsets of vertices.
	The three following rules of the do-calculus are sound.
	\begin{equation*}
		\begin{aligned}
			\textbf{Rule 1:}& \probac{\mathbb{v}^{\mathbb{y}}}{\interv{\mathbb{v}^{\mathbb{u}}},\mathbb{v}^{\mathbb{x}},\mathbb{v}^{\mathbb{w}}} = \probac{\mathbb{v}^{\mathbb{y}}}{\interv{\mathbb{v}^{\mathbb{u}}},\mathbb{v}^{\mathbb{w}}}\\
			&\text{if } \dsepc{\mathbb{Y}}{\mathbb{X}}{\mathbb{U}, \mathbb{W}}{\mathcal{G}^s_{\bar{\mathbb{U}}}}\\
			\textbf{Rule 2:}& \probac{\mathbb{v}^{\mathbb{y}}}{\interv{\mathbb{v}^{\mathbb{u}}},\interv{\mathbb{v}^{\mathbb{x}}},\mathbb{v}^{\mathbb{w}}} = \probac{\mathbb{v}^{\mathbb{y}}}{\interv{\mathbb{v}^{\mathbb{u}}},\mathbb{v}^{\mathbb{x}},\mathbb{v}^{\mathbb{w}}}\\
			&\text{if } \dsepc{\mathbb{Y}}{\mathbb{X}}{\mathbb{U}, \mathbb{W}}{\mathcal{G}^s_{\bar{\mathbb{U}}\underaccent{\bar}{\mathbb{X}}}}\\
			\textbf{Rule 3:}& \probac{\mathbb{v}^{\mathbb{y}}}{\interv{\mathbb{v}^{\mathbb{u}}},\interv{\mathbb{v}^{\mathbb{x}}},\mathbb{v}^{\mathbb{w}}} = \probac{\mathbb{v}^{\mathbb{y}}}{\interv{\mathbb{v}^{\mathbb{u}}},\mathbb{v}^{\mathbb{w}}}\\
			&\text{if } \dsepc{\mathbb{Y}}{\mathbb{X}}{\mathbb{U}, \mathbb{W}}{\mathcal{G}^s_{\bar{\mathbb{U}}\bar{\mathbb{X}(\mathbb{W})}}}\\
		\end{aligned}
	\end{equation*}
	where $\mathbb{X}(\mathbb{W})$ is the set of vertices in $\mathbb{X}$ that are non-ancestors of any vertex in $\mathbb{W}$ in the mutilated graph $\mathcal{G}^s_{\bar{\mathbb{U}}}$.
\end{theorem}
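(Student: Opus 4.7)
The plan is to reduce the SCG-level soundness claim to the well-known soundness of Pearl's do-calculus on ADMGs, using the FT-ADMG as an intermediary. Fix any DTDSCM compatible with $\mathcal{G}^s$ and let $\mathcal{G}=(\mathbb{V},\mathbb{E})$ be its induced FT-ADMG. Because $\mathcal{G}$ is itself an ADMG, the standard do-calculus applies to the variable sets $\mathbb{V}^{\mathbb{X}},\mathbb{V}^{\mathbb{Y}},\mathbb{V}^{\mathbb{U}},\mathbb{V}^{\mathbb{W}}$, viewing each macro distribution $\probac{\mathbb{v}^{\mathbb{y}}}{\interv{\mathbb{v}^{\mathbb{u}}},\ldots}$ simply as a joint micro-level distribution on these sets. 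It therefore suffices, for each of the three rules, to lift the SCG-level d-separation precondition to the corresponding FT-ADMG-level one and then invoke the classical rule in $\mathcal{G}$; since $\mathcal{G}$ is arbitrary among compatible FT-ADMGs, the equality will hold under every compatible DTDSCM.

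For Rules 1 and 2 the lifting is essentially mechanical. By Property~\ref{prop:mutilated_graphs}, $\mathcal{G}^s_{\bar{\mathbb{U}}}$ is an SCG compatible with $\mathcal{G}_{\bar{\mathbb{V}^{\mathbb{U}}}}$, and $\mathcal{G}^s_{\bar{\mathbb{U}}\underaccent{\bar}{\mathbb{X}}}$ is an SCG compatible with $\mathcal{G}_{\bar{\mathbb{V}^{\mathbb{U}}}\underaccent{\bar}{\mathbb{V}^{\mathbb{X}}}}$. Applying Theorem~\ref{theorem:soundness_d-sep} to these compatible pairs, the SCG-level d-separation $\dsepc{\mathbb{Y}}{\mathbb{X}}{\mathbb{U},\mathbb{W}}{\mathcal{G}^s_{\bar{\mathbb{U}}}}$ (resp.\ in $\mathcal{G}^s_{\bar{\mathbb{U}}\underaccent{\bar}{\mathbb{X}}}$) implies $\dsepc{\mathbb{V}^{\mathbb{Y}}}{\mathbb{V}^{\mathbb{X}}}{\mathbb{V}^{\mathbb{U}},\mathbb{V}^{\mathbb{W}}}{\mathcal{G}_{\bar{\mathbb{V}^{\mathbb{U}}}}}$ (resp.\ in $\mathcal{G}_{\bar{\mathbb{V}^{\mathbb{U}}}\underaccent{\bar}{\mathbb{V}^{\mathbb{X}}}}$). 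Standard Rules 1 and 2 applied to $\mathcal{G}$ with these four temporal sets then deliver the claimed equalities.

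The main obstacle is Rule 3, because the set $\mathbb{X}(\mathbb{W})$ is defined intrinsically inside the SCG (as non-ancestors of $\mathbb{W}$ in $\mathcal{G}^s_{\bar{\mathbb{U}}}$) while Pearl's rule, applied at the FT-ADMG level, refers to the set $\mathbb{V}^{\mathbb{X}}(\mathbb{V}^{\mathbb{W}})$ of non-ancestors of $\mathbb{V}^{\mathbb{W}}$ in $\mathcal{G}_{\bar{\mathbb{V}^{\mathbb{U}}}}$; these are, a priori, different mutilations. The key auxiliary observation I would prove is the inclusion $\mathbb{V}^{\mathbb{X}(\mathbb{W})}\subseteq\mathbb{V}^{\mathbb{X}}(\mathbb{V}^{\mathbb{W}})$: if some temporal instance $V^i_t$ with $V^i\in\mathbb{X}(\mathbb{W})$ were an ancestor of $W^j_{t'}\in\mathbb{V}^{\mathbb{W}}$ in $\mathcal{G}_{\bar{\mathbb{V}^{\mathbb{U}}}}$, then projecting the corresponding directed path through $\tau_0$ (noting that no edge coming into $\mathbb{V}^{\mathbb{U}}$ is used) yields a directed walk, and hence a directed path, from $V^i$ to $W^j$ in $\mathcal{G}^s_{\bar{\mathbb{U}}}$, contradicting $V^i\in\mathbb{X}(\mathbb{W})$. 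Consequently the FT-ADMG $\mathcal{G}_{\bar{\mathbb{V}^{\mathbb{U}}}\bar{\mathbb{V}^{\mathbb{X}}(\mathbb{V}^{\mathbb{W}})}}$ is a subgraph of $\mathcal{G}_{\bar{\mathbb{V}^{\mathbb{U}}}\bar{\mathbb{V}^{\mathbb{X}(\mathbb{W})}}}$, and any d-separation in the latter is inherited by the former (since removing edges only creates new d-separations).

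Chaining these steps: the SCG Rule 3 hypothesis $\dsepc{\mathbb{Y}}{\mathbb{X}}{\mathbb{U},\mathbb{W}}{\mathcal{G}^s_{\bar{\mathbb{U}}\bar{\mathbb{X}(\mathbb{W})}}}$ lifts by Property~\ref{prop:mutilated_graphs} together with Theorem~\ref{theorem:soundness_d-sep} to $\dsepc{\mathbb{V}^{\mathbb{Y}}}{\mathbb{V}^{\mathbb{X}}}{\mathbb{V}^{\mathbb{U}},\mathbb{V}^{\mathbb{W}}}{\mathcal{G}_{\bar{\mathbb{V}^{\mathbb{U}}}\bar{\mathbb{V}^{\mathbb{X}(\mathbb{W})}}}}$, and then via the monotonicity argument above to the same d-separation in $\mathcal{G}_{\bar{\mathbb{V}^{\mathbb{U}}}\bar{\mathbb{V}^{\mathbb{X}}(\mathbb{V}^{\mathbb{W}})}}$. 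Standard Rule 3 applied to $\mathcal{G}$ then yields the required equality of interventional distributions. Because $\mathcal{G}$ was an arbitrary compatible FT-ADMG, all three equalities hold in every DTDSCM compatible with $\mathcal{G}^s$, proving soundness.
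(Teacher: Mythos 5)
Your proposal is correct and follows essentially the same route as the paper: reduce each rule to Pearl's do-calculus on an arbitrary compatible FT-ADMG by lifting the SCG-level d-separation through Property~\ref{prop:mutilated_graphs} and Theorem~\ref{theorem:soundness_d-sep}, with Rule~3 requiring the extra observation that non-ancestry of $\mathbb{W}$ in $\mathcal{G}^s_{\bar{\mathbb{U}}}$ transfers to non-ancestry of $\mathbb{V}^{\mathbb{W}}$ in $\mathcal{G}_{\bar{\mathbb{V}^{\mathbb{U}}}}$. Your handling of Rule~3 is in fact slightly more careful than the paper's: where the paper asserts that $\mathcal{G}^s_{\bar{\mathbb{U}}\bar{\mathbb{X}(\mathbb{W})}}$ and $\mathcal{G}_{\bar{\mathbb{V}^{\mathbb{U}}}\bar{\mathbb{V}^{\mathbb{X}}(\mathbb{V}^{\mathbb{W}})}}$ are compatible, you correctly note that $\mathbb{V}^{\mathbb{X}(\mathbb{W})}\subseteq\mathbb{V}^{\mathbb{X}}(\mathbb{V}^{\mathbb{W}})$ may be strict and close the gap with monotonicity of d-separation under edge removal.
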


\begin{figure}[t!]
	\centering
	\begin{subfigure}{.15\textwidth}
		\centering
		\begin{tikzpicture}[{black, circle, draw, inner sep=0}]
			\tikzset{nodes={draw,rounded corners},minimum height=0.6cm,minimum width=0.6cm}	
			\tikzset{anomalous/.append style={fill=easyorange}}
			\tikzset{rc/.append style={fill=easyorange}}
			
			\node[fill=red!30] (X) at (0,0) {$X$} ;
			\node[fill=blue!30] (Y) at (1.2,0) {$Y$};
			
			\begin{scope}[transform canvas={yshift=-.25em}]
				\draw [->,>=latex] (X) -- (Y);
			\end{scope}
			\begin{scope}[transform canvas={yshift=.25em}]
				\draw [<-,>=latex] (X) -- (Y);
			\end{scope}
			
			\draw[->,>=latex] (X) to [out=180,in=135, looseness=2] (X);

            \draw[<->,>=latex, dashed] (X) to [out=-155,in=-110, looseness=2] (X);
        \draw[<->,>=latex, dashed] (Y) to [out=-25,in=-70, looseness=2] (Y);

		\end{tikzpicture}
		\caption{\centering}
		\label{fig:non_identifiable:a}
	\end{subfigure}
	\hfill 
	\begin{subfigure}{.15\textwidth}
		\centering
		\begin{tikzpicture}[{black, circle, draw, inner sep=0}]
			\tikzset{nodes={draw,rounded corners},minimum height=0.6cm,minimum width=0.6cm}	
			\tikzset{anomalous/.append style={fill=easyorange}}
			\tikzset{rc/.append style={fill=easyorange}}
			
			\node[fill=red!30] (X) at (0,0) {$X$} ;
			\node[fill=blue!30] (Y) at (1.2,0) {$Y$};
			
			\draw [->,>=latex] (X) -- (Y);
			\draw[<->,>=latex, dashed] (X) to [out=90,in=90, looseness=1] (Y);
			
			\draw[->,>=latex] (X) to [out=180,in=135, looseness=2] (X);
			\draw[->,>=latex] (Y) to [out=15,in=60, looseness=2] (Y);

    			\draw[<->,>=latex, dashed] (X) to [out=-155,in=-110, looseness=2] (X);
			\draw[<->,>=latex, dashed] (Y) to [out=-25,in=-70, looseness=2] (Y);

		\end{tikzpicture}
		\caption{\centering}
		\label{fig:non_identifiable:b}
	\end{subfigure}
	\hfill 
	\begin{subfigure}{.15\textwidth}
		\centering
		\begin{tikzpicture}[{black, circle, draw, inner sep=0}]
			\tikzset{nodes={draw,rounded corners},minimum height=0.6cm,minimum width=0.6cm}	
			\tikzset{anomalous/.append style={fill=easyorange}}
			\tikzset{rc/.append style={fill=easyorange}}
			
			\node (W) at (0,1.2) {$W$} ;
			\node[fill=red!30] (X) at (0,0) {$X$} ;
			\node[fill=blue!30] (Y) at (1.2,0) {$Y$};
			
			\draw [->,>=latex] (X) -- (Y);
			\begin{scope}[transform canvas={xshift=-.25em}]
				\draw [->,>=latex] (X) -- (W);
			\end{scope}
			\begin{scope}[transform canvas={xshift=.25em}]
				\draw [<-,>=latex] (X) -- (W);
			\end{scope}
			\draw[<->,>=latex, dashed] (W) to [out=0,in=90, looseness=1] (Y);
			
			\draw[->,>=latex] (W) to [out=180,in=135, looseness=2] (W);
			\draw[->,>=latex] (X) to [out=180,in=135, looseness=2] (X);
			\draw[->,>=latex] (Y) to [out=15,in=60, looseness=2] (Y);

  			\draw[<->,>=latex, dashed] (X) to [out=-25,in=-70, looseness=2] (X);
			\draw[<->,>=latex, dashed] (Y) to [out=-25,in=-70, looseness=2] (Y);
			\draw[<->,>=latex, dashed] (W) to [out=20,in=65, looseness=2] (W);			

		\end{tikzpicture}
		\caption{\centering}
		\label{fig:non_identifiable:c}
	\end{subfigure}
			

			

			
			
	\caption{SCGs with not identifiable macro total effects. Each pair of red and blue vertices represents the total effect we are interested in.}
	\label{fig:non_identifiable}
\end{figure}%

\begin{proof}
	Let $\mathcal{G}^s=(\mathbb{S},\mathbb{E}^s)$ be an SCG and $\mathbb{X},\mathbb{Y},\mathbb{U},\mathbb{W}\subseteq\mathbb{S}$.
	
	Suppose that $\dsepc{\mathbb{Y}}{\mathbb{X}}{\mathbb{U}, \mathbb{W}}{\mathcal{G}^s_{\bar{\mathbb{U}}}}$.
	Using Property~\ref{prop:mutilated_graphs} and Theorem \ref{theorem:soundness_d-sep} we know that for every compatible FT-ADMG $\mathcal{G}$, $\dsepc{\mathbb{V}^{\mathbb{Y}}}{\mathbb{V}^{\mathbb{X}}}{\mathbb{V}^{\mathbb{U}}, \mathbb{V}^{\mathbb{W}}}{\mathcal{G}_{\bar{\mathbb{V}^{\mathbb{U}}}}}$.
	Thus, the usual rule 1 of the do-calculus given by \citealt{Pearl_1995} in the case of ADMGs states that $\probac{\mathbb{v}^{\mathbb{y}}}{\interv{\mathbb{v}^{\mathbb{u}}},\mathbb{v}^{\mathbb{x}},\mathbb{v}^{\mathbb{w}}} = \probac{\mathbb{v}^{\mathbb{y}}}{\interv{\mathbb{v}^{\mathbb{u}}},\mathbb{v}^{\mathbb{w}}}$.
	
	Suppose that $\dsepc{\mathbb{Y}}{\mathbb{X}}{\mathbb{U}, \mathbb{W}}{\mathcal{G}^s_{\bar{\mathbb{U}}\underaccent{\bar}{\mathbb{X}}}}$.
	Using Property~\ref{prop:mutilated_graphs} and Theorem \ref{theorem:soundness_d-sep} we know that for every compatible FT-ADMG $\mathcal{G}$, $\dsepc{\mathbb{V}^{\mathbb{Y}}}{\mathbb{V}^{\mathbb{X}}}{ \mathbb{V}^{\mathbb{U}},\mathbb{V}^{\mathbb{W}}}{\mathcal{G}_{\bar{\mathbb{V}^{\mathbb{U}}}\underaccent{\bar}{\mathbb{V}^\mathbb{X}}}}$.
	Thus, the usual rule 2 of the do-calculus given by \citealt{Pearl_1995} in the case of ADMGs states that $\probac{\mathbb{v}^{\mathbb{y}}}{\interv{\mathbb{v}^{\mathbb{u}}},\interv{\mathbb{v}^{\mathbb{x}}},\mathbb{v}^{\mathbb{w}}} = \probac{\mathbb{v}^{\mathbb{y}}}{\interv{\mathbb{v}^{\mathbb{u}}},\mathbb{v}^{\mathbb{x}},\mathbb{v}^{\mathbb{w}}}$.
	
	Suppose that $\dsepc{\mathbb{Y}}{\mathbb{X}}{\mathbb{U}, \mathbb{W}}{\mathcal{G}^s_{\bar{\mathbb{U}}\underaccent{\bar}{\mathbb{X}(\mathbb{W}})}}$.
	Using Property~\ref{prop:mutilated_graphs}, $\mathcal{G}^s_{\bar{\mathbb{U}}}$ is compatible with any $\mathcal{G}_{\bar{\mathbb{V}^{\mathbb{U}}}}$ and if $\exists X_{t_X}\in\mathbb{V}^{\mathbb{X}}$ such that $X_{t_X}\in \ancestors{\mathbb{V}^{\mathbb{W}}}{\mathcal{G}_{\bar{\mathbb{V}^{\mathbb{U}}}}}$ then $X\in \ancestors{\mathbb{W}}{\mathcal{G}^s_{\bar{\mathbb{U}}}}$. Therefore, $\forall X\in\mathbb{X},~ X\notin \ancestors{\mathbb{W}}{\mathcal{G}^s_{\bar{\mathbb{U}}}} \implies \forall X_{t_X} \in \mathbb{V}^{\mathbb{X}},~ X_{t_X}\notin \ancestors{\mathbb{V}^{\mathbb{X}}}{\mathcal{G}_{\bar{\mathbb{V}^{\mathbb{U}}}}}$.
	Thus, $\mathcal{G}^s_{\bar{\mathbb{U}}\bar{\mathbb{X}(\mathbb{W}})}$ and $\mathcal{G}_{\bar{\mathbb{V}^{\mathbb{U}}}\bar{\mathbb{V}^{\mathbb{X}}(\mathbb{V}^{\mathbb{W}}})}$ are compatible, and using Theorem \ref{theorem:soundness_d-sep} we know that for every compatible FT-ADMG $\mathcal{G}$, $\dsepc{\mathbb{V}^{\mathbb{Y}}}{\mathbb{V}^{\mathbb{X}}}{\mathbb{V}^{\mathbb{U}},\mathbb{V}^{\mathbb{W}}}{\mathcal{G}_{\bar{\mathbb{V}^{\mathbb{U}}}\bar{\mathbb{V}^{\mathbb{X}}(\mathbb{V}^{\mathbb{W}}})}}$.
	Thus, the usual rule 3 of the do-calculus given by \citealt{Pearl_1995} in the case of ADMGs states that $\probac{\mathbb{v}^{\mathbb{y}}}{\interv{\mathbb{v}^{\mathbb{u}}},\interv{\mathbb{v}^{\mathbb{x}}},\mathbb{v}^{\mathbb{w}}} = \probac{\mathbb{v}^{\mathbb{y}}}{\interv{\mathbb{v}^{\mathbb{u}}},\mathbb{v}^{\mathbb{w}}}$.
\end{proof}

Using the soundness of the do-calculus in SCGs (Theorem~\ref{theorem:soundness_do-calculus}), we can easily use the rules of the do-calculus to find out that the total effect $\probac{\mathbb{v}^\mathbb{y}}{\interv{\mathbb{v}^\mathbb{x}}}$
is identifiable in all SCGs in Figure~\ref{fig:identifiable}.
Both in Figure~\ref{fig:identifiable:a} and~\ref{fig:identifiable:c}, one can verify that $\dsep{\mathbb{Y}}{\mathbb{X}}{\mathcal{G}^s_{\underaccent{\bar}{{\mathbb{X}}}}}$, thus Rule 2 of the do-calculus is applicable and $\probac{\mathbb{v}^\mathbb{y}}{\interv{\mathbb{v}^\mathbb{x}}} = \probac{\mathbb{v}^\mathbb{y}}{\mathbb{v}^\mathbb{x}}$. Notice that Figure~\ref{fig:identifiable:b} does not contain any cycle other than self-loops and is very similar to Figure 1(b) of \cite{Anand_2023} which corresponds to the well-known front-door criterion~\citep{Pearl_2000}. Thus, using the corresponding sequence of classical rules of probability and rules of do-calculus as the one given in \citep[p.83]{Pearl_2000}, one obtains $\probac{\mathbb{v}^\mathbb{y}}{\interv{\mathbb{v}^\mathbb{x}}} = \sum_{\mathbb{v}^\mathbb{w}} \probac{\mathbb{v}^\mathbb{w}}{\mathbb{v}^\mathbb{x}} \sum_{\mathbb{v}^\mathbb{x'}} \probac{\mathbb{v}^\mathbb{y}}{\mathbb{v}^\mathbb{w},\mathbb{v}^\mathbb{x'}} \proba{\mathbb{v}^\mathbb{x'}}$. In these three examples, the rules of do-calculus allow macro interventional distributions to be expressed solely in terms of observational distributions. Consequently, the macro total effect can be estimated from the data, provided the positivity assumption holds.

In the following, we introduce the second main theorem of this section.

\begin{theorem}[Completeness of the do-calculus]
	\label{theorem:completeness_do-calculus}
	If one of the do-calculus rules does not apply for a given SCG, then there exists a compatible FT-ADMG for which the corresponding rule does not apply.
\end{theorem}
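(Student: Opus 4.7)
The plan is to prove the contrapositive of each rule separately: assume the d-separation condition in the relevant mutilated SCG fails, and exhibit a compatible FT-ADMG whose own mutilation also fails the analogous d-separation condition. This reduces the problem to the completeness of d-separation already established in Theorem~\ref{theorem:completeness_d-sep}, combined with the compatibility of mutilations given by Property~\ref{prop:mutilated_graphs}.

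For Rule 1, suppose $\dsepc{\mathbb{Y}}{\mathbb{X}}{\mathbb{U},\mathbb{W}}{\mathcal{G}^s_{\bar{\mathbb{U}}}}$ fails, so a $(\mathbb{U}\cup\mathbb{W})$-active path $\pi^s$ exists in $\mathcal{G}^s_{\bar{\mathbb{U}}}$. I would reapply the construction from the proof of Theorem~\ref{theorem:completeness_d-sep}, viewing $\pi^s$ as a path inside the full $\mathcal{G}^s$: every edge appearing on $\pi^s$ is made instantaneous, and every other edge of $\mathcal{G}^s$ (including the edges incoming into $\mathbb{U}$ that were removed by the SCG mutilation) is realized as a one-step lagged edge. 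The resulting FT-ADMG $\mathcal{G}$ is acyclic and compatible with $\mathcal{G}^s$. By Property~\ref{prop:mutilated_graphs}, $\mathcal{G}_{\bar{\mathbb{V}^{\mathbb{U}}}}$ is compatible with $\mathcal{G}^s_{\bar{\mathbb{U}}}$, and the instantaneous copy $\pi$ of $\pi^s$ at time $t_0$ survives in $\mathcal{G}_{\bar{\mathbb{V}^{\mathbb{U}}}}$ because $\pi^s$ has no edges incoming into $\mathbb{U}$. To conclude that $\pi$ is $(\mathbb{V}^{\mathbb{U}}\cup\mathbb{V}^{\mathbb{W}})$-active, mirror the collider step of Theorem~\ref{theorem:completeness_d-sep}: for each collider on $\pi^s$, its SCG-descendant in $\mathbb{U}\cup\mathbb{W}$ witnessed inside $\mathcal{G}^s_{\bar{\mathbb{U}}}$ lifts to an FT-ADMG descendant inside $\mathcal{G}_{\bar{\mathbb{V}^{\mathbb{U}}}}$, because the witnessing directed path uses only edges of $\mathcal{G}^s_{\bar{\mathbb{U}}}$, all of which survive as directed (instantaneous or lagged) edges after FT-ADMG mutilation. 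Rule 2 is handled identically, using the symmetric mutilation and exploiting the fact that $\pi^s$ avoids both edges incoming into $\mathbb{U}$ and edges outgoing from $\mathbb{X}$, so the corresponding lagged-or-instantaneous copy $\pi$ is preserved in $\mathcal{G}_{\bar{\mathbb{V}^{\mathbb{U}}}\underaccent{\bar}{\mathbb{V}^{\mathbb{X}}}}$.

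Rule 3 is the main obstacle and requires more care, because the SCG mutilation $\mathcal{G}^s_{\bar{\mathbb{U}}\bar{\mathbb{X}(\mathbb{W})}}$ uses the ancestral set $\mathbb{X}(\mathbb{W})$ computed inside $\mathcal{G}^s_{\bar{\mathbb{U}}}$, whereas the FT-ADMG rule uses $\mathbb{V}^{\mathbb{X}}(\mathbb{V}^{\mathbb{W}})$ computed inside $\mathcal{G}_{\bar{\mathbb{V}^{\mathbb{U}}}}$, and the two sets need not satisfy $\mathbb{V}^{\mathbb{X}(\mathbb{W})}=\mathbb{V}^{\mathbb{X}}(\mathbb{V}^{\mathbb{W}})$ under an arbitrary construction. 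The inclusion $\mathbb{V}^{\mathbb{X}(\mathbb{W})}\subseteq\mathbb{V}^{\mathbb{X}}(\mathbb{V}^{\mathbb{W}})$ is automatic because any FT-ADMG directed path projects to an SCG directed walk. For the reverse inclusion at the temporal indices that matter, I would insist that $t_{max}-t_0$ is at least as large as the number of vertices in $\mathbb{S}$, so that every SCG-ancestor $X$ of some $W\in\mathbb{W}$ in $\mathcal{G}^s_{\bar{\mathbb{U}}}$ yields an FT-ADMG directed path from $X_{t_0}$ to some $W_{t'}\in\mathbb{V}^{\mathbb{W}}$ within the horizon, thereby forcing $X_{t_0}\notin\mathbb{V}^{\mathbb{X}}(\mathbb{V}^{\mathbb{W}})$. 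This guarantees that the only vertices of $\mathbb{V}^{\mathbb{X}}(\mathbb{V}^{\mathbb{W}})$ that $\pi$ can land on are those in $\mathbb{V}^{\mathbb{X}(\mathbb{W})}$, into which $\pi$ carries no incoming edges since $\pi^s$ is a path in $\mathcal{G}^s_{\bar{\mathbb{U}}\bar{\mathbb{X}(\mathbb{W})}}$. The most delicate check I foresee is the collider-descendant condition under this stronger mutilation: the witnessing SCG directed paths pass through intermediate vertices whose incoming edges must also survive the FT-ADMG mutilation, which forces an argument that these intermediate vertices are either outside $\mathbb{X}$ or are themselves SCG-ancestors of $\mathbb{W}$ realized as FT-ADMG-ancestors starting from $t_0$ under the chosen horizon.
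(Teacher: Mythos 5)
Your overall strategy coincides with the paper's: both proofs reduce the claim to the completeness of d-separation (Theorem~\ref{theorem:completeness_d-sep}) combined with the compatibility of mutilated graphs (Property~\ref{prop:mutilated_graphs}), treated rule by rule. The only structural difference is the order of operations: the paper applies Theorem~\ref{theorem:completeness_d-sep} directly to the mutilated SCG $\mathcal{G}^s_{\bar{\mathbb{U}}}$ (resp.\ $\mathcal{G}^s_{\bar{\mathbb{U}}\underaccent{\bar}{\mathbb{X}}}$, $\mathcal{G}^s_{\bar{\mathbb{U}}\bar{\mathbb{X}(\mathbb{W})}}$) to obtain an FT-ADMG $\tilde{\mathcal{G}}$ compatible with it, and then argues that one can add back representatives of the deleted edges to get a $\mathcal{G}$ compatible with $\mathcal{G}^s$ whose mutilation equals $\tilde{\mathcal{G}}$; you instead build $\mathcal{G}$ from the full $\mathcal{G}^s$ around the active path and then mutilate, checking that the path and its collider witnesses survive. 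The two routes require essentially the same verifications.

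Two remarks on Rule 3, where you rightly locate the delicate point. First, your concern about the mismatch between $\mathbb{V}^{\mathbb{X}(\mathbb{W})}$ and $\mathbb{V}^{\mathbb{X}}(\mathbb{V}^{\mathbb{W}})$ is genuine --- only the inclusion $\mathbb{V}^{\mathbb{X}(\mathbb{W})}\subseteq\mathbb{V}^{\mathbb{X}}(\mathbb{V}^{\mathbb{W}})$ comes for free, and the paper disposes of the issue with a one-line appeal to the soundness proof; making the reverse inclusion hold at the time indices visited by the lifted path is exactly what must be arranged, so your explicit treatment is an improvement. Second, however, your fix rests on choosing $t_{max}-t_0\geq|\mathbb{S}|$, and the horizon is not at your disposal: all FT-ADMGs compatible with a given SCG share the vertex set $\mathbb{V}^{\mathbb{S}}$ fixed by the underlying DTDSCM, so you cannot simply insist on a long horizon. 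The same tension is already latent in the construction used for Theorem~\ref{theorem:completeness_d-sep} (collider-descendant witnesses are realized there by lagged directed paths whose total lag must fit inside $[t_0,t_{max}]$), so your proposal is no worse off than the paper's; but to close the argument for an arbitrary fixed horizon you would need to realize the witnessing directed paths with instantaneous edges (and then re-verify acyclicity of the resulting FT-ADMG), rather than rely on the horizon being long enough.
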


\begin{proof}
	Let $\mathcal{G}^s=(\mathbb{S},\mathbb{E}^s)$ be an SCG and $\mathbb{X},\mathbb{Y},\mathbb{U},\mathbb{W}\subseteq\mathbb{S}$.
	
	Suppose that rule 1 does not apply \ie, $\notdsepc{\mathbb{Y}}{\mathbb{X}}{\mathbb{U}, \mathbb{W}}{\mathcal{G}^s_{\bar{\mathbb{U}}}}$.
	Then, using Theorem~\ref{theorem:completeness_d-sep} there exists an FT-ADMG $\tilde{\mathcal{G}}$ compatible with $\mathcal{G}^s_{\bar{\mathbb{U}}}$ in which $\notdsepc{\mathbb{V}^{\mathbb{Y}}}{\mathbb{V}^{\mathbb{X}}}{\mathbb{V}^{\mathbb{U}}, \mathbb{V}^{\mathbb{W}}}{\tilde{\mathcal{G}}}$.
	Notice that there exists an FT-ADMG $\mathcal{G}$ compatible with $\mathcal{G}^s$ such that $\mathcal{G}_{\bar{\mathbb{V}^{\mathbb{U}}}} = \tilde{\mathcal{G}}$.
	Therefore, $\mathcal{G}$ is an FT-ADMG compatible with $\mathcal{G}^s$ in which $\notdsepc{\mathbb{V}^{\mathbb{Y}}}{\mathbb{V}^{\mathbb{X}}}{\mathbb{V}^{\mathbb{U}}, \mathbb{V}^{\mathbb{W}}}{\mathcal{G}_{\bar{\mathbb{V}^{\mathbb{U}}}}}$ and thus 
 the usual rule 1 of the do-calculus given by \citealt{Pearl_1995} in the case of ADMGs does not apply.
	
	Suppose that rule 2 does not apply \ie, $\notdsepc{\mathbb{Y}}{\mathbb{X}}{\mathbb{U}, \mathbb{W}}{\mathcal{G}^s_{\bar{\mathbb{U}}\underaccent{\bar}{\mathbb{X}}}}$.
	Then, using Theorem~\ref{theorem:completeness_d-sep} there exists an FT-ADMG $\tilde{\mathcal{G}}$ compatible with $\mathcal{G}^s_{\bar{\mathbb{U}}\underaccent{\bar}{\mathbb{X}}}$ in which $\notdsepc{\mathbb{V}^{\mathbb{Y}}}{\mathbb{V}^{\mathbb{X}}}{\mathbb{V}^{\mathbb{U}}, \mathbb{V}^{\mathbb{W}}}{\tilde{\mathcal{G}}}$.
	Notice that there exists an FT-ADMG $\mathcal{G}$ compatible with $\mathcal{G}^s$ such that $\mathcal{G}_{\bar{\mathbb{V}^{\mathbb{U}}}\underaccent{\bar}{\mathbb{V}^{\mathbb{X}}}} = \tilde{\mathcal{G}}$.
	Therefore, $\mathcal{G}$ is an FT-ADMG compatible with $\mathcal{G}^s$ in which $\notdsepc{\mathbb{V}^{\mathbb{Y}}}{\mathbb{V}^{\mathbb{X}}}{\mathbb{V}^{\mathbb{U}}, \mathbb{V}^{\mathbb{W}}}{\mathcal{G}_{\bar{\mathbb{V}^{\mathbb{U}}}\underaccent{\bar}{\mathbb{V}^{\mathbb{X}}}}}$ and thus 
 the usual rule 2 of the do-calculus given by \citealt{Pearl_1995} in the case of ADMGs does not apply.
	
	Suppose that rule 3 does not apply \ie, $\notdsepc{\mathbb{Y}}{\mathbb{X}}{\mathbb{U}, \mathbb{W}}{\mathcal{G}^s_{\bar{\mathbb{U}}\bar{\mathbb{X}(\mathbb{W})}}}$.
	Then, using Theorem~\ref{theorem:completeness_d-sep} there exists an FT-ADMG $\tilde{\mathcal{G}}$ compatible with $\mathcal{G}^s_{\bar{\mathbb{U}}\bar{\mathbb{X}(\mathbb{W})}}$ in which $\notdsepc{\mathbb{V}^{\mathbb{Y}}}{\mathbb{V}^{\mathbb{X}}}{\mathbb{V}^{\mathbb{U}}, \mathbb{V}^{\mathbb{W}}}{\tilde{\mathcal{G}}}$.
	Using the same idea as in the proof of Theorem~\ref{theorem:soundness_do-calculus}, notice that there exists an FT-ADMG $\mathcal{G}$ compatible with $\mathcal{G}^s$ such that $\mathcal{G}_{\bar{\mathbb{V}^{\mathbb{U}}}\bar{\mathbb{V}^{\mathbb{X}}(\mathbb{V}^{\mathbb{W}})}} = \tilde{\mathcal{G}}$.
	Therefore, $\mathcal{G}$ is an FT-ADMG compatible with $\mathcal{G}^s$ in which $\notdsepc{\mathbb{V}^{\mathbb{Y}}}{\mathbb{V}^{\mathbb{X}}}{\mathbb{V}^{\mathbb{U}}, \mathbb{V}^{\mathbb{W}}}{\mathcal{G}_{\bar{\mathbb{V}^{\mathbb{U}}}\bar{\mathbb{V}^{\mathbb{X}}(\mathbb{V}^{\mathbb{W}})}}}$ and thus 
 the usual rule 3 of the do-calculus given by \citealt{Pearl_1995} in the case of ADMGs does not apply.
\end{proof}

Using the completeness of the do-calculus in SCGs (Theorem~\ref{theorem:completeness_do-calculus}), we can determine that the total effect $\probac{\mathbb{v}^\mathbb{y}}{\interv{\mathbb{v}^\mathbb{x}}}$  is not identifiable in all SCGs depicted in Figure~\ref{fig:non_identifiable} and in the SCG in Figure~\ref{fig:SCG_0} by examining all possible iterations of the rules of the do-calculus.
However, it is well known that exhaustively examining all possibilities for applying the rules of do-calculus can quickly become impractical, particularly for large graphs. To address this challenge, the following subsection introduces a sub-graphical structure designed to directly determine whether it is feasible to express an interventional distribution solely in terms of observational distributions using do-calculus and SCGs.



\subsection{Strongly Connected Hedges and Non-identifiability}
In ADMGs, there exists a sub-graphical structure, called an Hedge~\citep{Shpitser_2006}, which is employed to  graphically characterize non-identifiability as shown in \citealp[Theorem 4]{Shpitser_2006}. 
To properly define it for SCGs, it is essential to first familiarize oneself with the two related definitions which we have adapted and provided below specifically for the context of SCGs:

\begin{definition}[C-component,  \citealt{Tian_2002}]
    Let $\mathcal{G}^s=(\mathbb{S},\mathbb{E}^s)$ be an SCG. A subset of vertices $\mathbb{V} \subseteq \mathbb{S}$ such that $\forall V,V' \in \mathbb{V},~\exists V^1,\cdots,V^n\in \mathbb{V}$ with $V^1=V$, $V^n=V'$ and $\forall i\in[1,n-1]~V^i\longdashleftrightarrow V^{i+1}$ is called a C-component.
\end{definition}
\begin{definition}[C-forest, \citealt{Shpitser_2006}]
    Let $\mathcal{G}^s=(\mathbb{S},\mathbb{E}^s)$ be an SCG. A subgraph\footnote{Following \citealt{Shpitser_2006}, we consider that a subgraph of $\mathcal{G}^s$ is a graph containing a subset of the vertices of $\mathcal{G}^s$ and a subset of the edges between those vertices in $\mathcal{G}^s$.} of  $\mathcal{G}^s=(\mathbb{S},\mathbb{E}^s)$ which is acyclic, where every vertex has at most one child (\ie, is a forest), and is a C-component is called a C-forest.
\end{definition}

\begin{definition}[Hedge,  \citealt{Shpitser_2006,Shpitser_2008}]
Consider an SCG $\mathcal{G}^s=(\mathbb{S}, \mathbb{E}^s)$ and two sets of vertices $\mathbb{X}, \mathbb{Y}\subset \mathbb{S}$.
Let $\mathbb{F}$ and $\mathbb{F}'$ be R-rooted C-forests in $\mathcal{G}^s$ such that $\mathbb{X} \cap \mathbb{F} \ne\emptyset$, $\mathbb{X} \cap \mathbb{F}' =\emptyset$, $\mathbb{F}' \subseteq \mathbb{F}$, and $R\subset \ancestors{Y}{\mathcal{G}^s_{\underaccent{\bar}{\mathbb{X}}}}$. Then $\mathbb{F}$ and $\mathbb{F}'$ form an Hedge for $\probac{\mathbb{v}^\mathbb{y}}{\interv{\mathbb{v}^\mathbb{x}}}$ in $\mathcal{G}^s$.
\end{definition}

An Hedge turned out to be too weak to cover non-identifiability in SCGs. 
For example, the SCG in Figure~\ref{fig:non_identifiable:a} contains no Hedge but the macro total effect is not identifiable due to the cycle between $X$ and $Y$.
Consequently, we introduce a new notion designed to graphically characterize non-identifiability of macro total effects in SCGs.

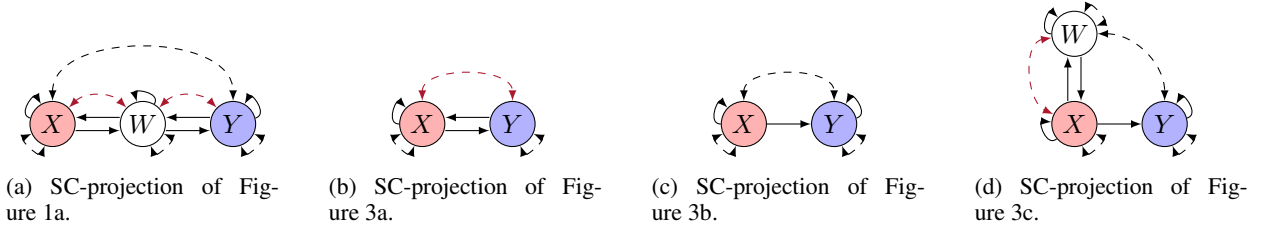
\begin{figure}[t!]
	\centering
 	\begin{subfigure}{.22\textwidth}
		\centering
		\begin{tikzpicture}[{black, circle, draw, inner sep=0}]
			\tikzset{nodes={draw,rounded corners},minimum height=0.6cm,minimum width=0.6cm}	
			\tikzset{latent/.append style={white, fill=black}}
			
			\node[fill=red!30] (X) at (0,0) {$X$} ;
			\node[fill=blue!30] (Y) at (2.4,0) {$Y$};
			\node (Z) at (1.2,0) {$W$};

			\begin{scope}[transform canvas={yshift=-.25em}]
				\draw [->,>=latex] (X) -- (Z);
			\end{scope}
			\begin{scope}[transform canvas={yshift=.25em}]
				\draw [<-,>=latex] (X) -- (Z);
			\end{scope}			
			
			\begin{scope}[transform canvas={yshift=-.25em}]
				\draw [->,>=latex] (Z) -- (Y);
			\end{scope}
			\begin{scope}[transform canvas={yshift=.25em}]
				\draw [<-,>=latex] (Z) -- (Y);
			\end{scope}

			\draw[<->,>=latex, dashed] (X) to [out=90,in=90, looseness=1] (Y);
			
			\draw[->,>=latex] (X) to [out=165,in=120, looseness=2] (X);
			\draw[->,>=latex] (Y) to [out=15,in=60, looseness=2] (Y);
			\draw[->,>=latex] (Z) to [out=60,in=105, looseness=2] (Z);
			
			\draw[<->,>=latex, dashed] (X) to [out=-155,in=-110, looseness=2] (X);
			\draw[<->,>=latex, dashed] (Y) to [out=-25,in=-70, looseness=2] (Y);
			\draw[<->,>=latex, dashed] (Z) to [out=-25,in=-70, looseness=2] (Z);			
   			\draw[<->,>=latex, dashed, sorbonnered] (X) to [out=45,in=135, looseness=1] (Z);
   			\draw[<->,>=latex, dashed, sorbonnered] (Z) to [out=45,in=135, looseness=1] (Y);

		\end{tikzpicture}
		\caption{SC-projection of  Figure~\ref{fig:SCG_0}.}
		\label{fig:proj_SCG_0}
	\end{subfigure}
        \hfill 
	\begin{subfigure}{.22\textwidth}
		\centering
		\begin{tikzpicture}[{black, circle, draw, inner sep=0}]
			\tikzset{nodes={draw,rounded corners},minimum height=0.6cm,minimum width=0.6cm}	
			\tikzset{anomalous/.append style={fill=easyorange}}
			\tikzset{rc/.append style={fill=easyorange}}
			
			\node[fill=red!30] (X) at (0,0) {$X$} ;
			\node[fill=blue!30] (Y) at (1.2,0) {$Y$};
			
			\begin{scope}[transform canvas={yshift=-.25em}]
				\draw [->,>=latex] (X) -- (Y);
			\end{scope}
			\begin{scope}[transform canvas={yshift=.25em}]
				\draw [<-,>=latex] (X) -- (Y);
			\end{scope}
			
			\draw[->,>=latex] (X) to [out=180,in=135, looseness=2] (X);

			\draw[<->,>=latex, dashed] (X) to [out=-155,in=-110, looseness=2] (X);
			\draw[<->,>=latex, dashed] (Y) to [out=-25,in=-70, looseness=2] (Y);

   		\draw[<->,>=latex, dashed, sorbonnered] (X) to [out=90,in=90, looseness=1] (Y);

		\end{tikzpicture}
		\caption{SC-projection of  Figure~\ref{fig:non_identifiable:a}.}
		\label{fig:proj_non_identifiable:a}
	\end{subfigure}

 \begin{subfigure}{.22\textwidth}
		\centering
		\begin{tikzpicture}[{black, circle, draw, inner sep=0}]
			\tikzset{nodes={draw,rounded corners},minimum height=0.6cm,minimum width=0.6cm}	
			\tikzset{anomalous/.append style={fill=easyorange}}
			\tikzset{rc/.append style={fill=easyorange}}
			
			\node[fill=red!30] (X) at (0,0) {$X$} ;
			\node[fill=blue!30] (Y) at (1.2,0) {$Y$};
			
			\draw [->,>=latex] (X) -- (Y);
			\draw[<->,>=latex, dashed] (X) to [out=90,in=90, looseness=1] (Y);
			
			\draw[->,>=latex] (X) to [out=180,in=135, looseness=2] (X);
			\draw[->,>=latex] (Y) to [out=15,in=60, looseness=2] (Y);

  			\draw[<->,>=latex, dashed] (X) to [out=-155,in=-110, looseness=2] (X);
			\draw[<->,>=latex, dashed] (Y) to [out=-25,in=-70, looseness=2] (Y);

		\end{tikzpicture}
		\caption{SC-projection of  Figure~\ref{fig:non_identifiable:b}.}
		\label{fig:proj_non_identifiable:b}
	\end{subfigure}
	\hfill 
	\begin{subfigure}{.22\textwidth}
		\centering
		\begin{tikzpicture}[{black, circle, draw, inner sep=0}]
			\tikzset{nodes={draw,rounded corners},minimum height=0.6cm,minimum width=0.6cm}	
			\tikzset{anomalous/.append style={fill=easyorange}}
			\tikzset{rc/.append style={fill=easyorange}}
			
			\node (W) at (0,1.2) {$W$} ;
			\node[fill=red!30] (X) at (0,0) {$X$} ;
			\node[fill=blue!30] (Y) at (1.2,0) {$Y$};
			
			\draw [->,>=latex] (X) -- (Y);
			\begin{scope}[transform canvas={xshift=-.25em}]
				\draw [->,>=latex] (X) -- (W);
			\end{scope}
			\begin{scope}[transform canvas={xshift=.25em}]
				\draw [<-,>=latex] (X) -- (W);
			\end{scope}
			\draw[<->,>=latex, dashed] (W) to [out=0,in=90, looseness=1] (Y);
			
			\draw[->,>=latex] (W) to [out=180,in=135, looseness=2] (W);
			\draw[->,>=latex] (X) to [out=220,in=175, looseness=2] (X);
			\draw[->,>=latex] (Y) to [out=15,in=60, looseness=2] (Y);

			\draw[<->,>=latex, dashed] (X) to [out=-25,in=-70, looseness=2] (X);
			\draw[<->,>=latex, dashed] (Y) to [out=-25,in=-70, looseness=2] (Y);
			\draw[<->,>=latex, dashed] (W) to [out=20,in=65, looseness=2] (W);			

   			\draw[<->,>=latex, dashed, sorbonnered] (W) to [out=190,in=155, looseness=1] (X);

		\end{tikzpicture}
		\caption{SC-projection of  Figure~\ref{fig:non_identifiable:c}.}
		\label{fig:proj_non_identifiable:c}
	\end{subfigure}
	\caption{SC-projections of the SCGs in Figures~\ref{fig:SCG_FTADMG} and \ref{fig:non_identifiable}. Each pair of red and blue vertices represents the total effect we are interested in, and the red edges indicate those added through the SC-projection.}
	\label{fig:projections}
\end{figure}%

\begin{definition}[Strongly connected projection (SC-projection)]
\label{def:SC-proj}
    Consider an SCG $\mathcal{G}^s=\{\mathbb{S}, \mathbb{E}^s\}$.
    The SC-projection $\mathcal{H}^s$ of $\mathcal{G}^s$ is the graph that includes all vertices and edges from $\mathcal{G}^s$, plus a bidirected dashed edge between each pair $X,Y\in \mathbb{S}$ such that $\scc{X}{\mathcal{G}^s} = \scc{Y}{\mathcal{G}^s}$ and $X\ne Y$.
\end{definition}

\begin{definition}[Strongly connected Hedge (SC-Hedge)]
\label{def:SC-Hedge}
    Consider an SCG $\mathcal{G}^s=\{\mathbb{S}, \mathbb{E}^s\}$ and its SC-projection $\mathcal{H}^s$.
    An Hedge for $\probac{\mathbb{v}^\mathbb{y}}{\interv{\mathbb{v}^\mathbb{x}}}$ in $\mathcal{H}^s$ is an SC-Hedge for $\probac{\mathbb{v}^\mathbb{y}}{\interv{\mathbb{v}^\mathbb{x}}}$ in $\mathcal{G}^s$.
\end{definition}
\begin{theorem}
\label{theorem:SC-Hedge}
    If there exists a SC-Hedge  for $\probac{\mathbb{v}^\mathbb{y}}{\interv{\mathbb{v}^\mathbb{x}}}$ in $\mathcal{G}^s$ then $\probac{\mathbb{v}^\mathbb{y}}{\interv{\mathbb{v}^\mathbb{x}}}$ is not identifiable.
\end{theorem}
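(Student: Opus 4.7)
The plan is to prove non-identifiability by exhibiting two DTDSCMs whose induced FT-ADMGs are both compatible with $\mathcal{G}^s$, that agree on every observational distribution over $\mathbb{V}$ but disagree on $\probac{\mathbb{v}^\mathbb{y}}{\interv{\mathbb{v}^\mathbb{x}}}$. Such a pair is a sufficient witness of non-identifiability, since any identification formula expressed purely in observational quantities would have to return the same value on both models, and by Theorem~\ref{theorem:completeness_do-calculus} the do-calculus is the complete identification apparatus in SCGs.

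By Definition~\ref{def:SC-Hedge}, the SC-Hedge $(\mathbb{F},\mathbb{F}')$ is an ordinary Hedge for $\probac{\mathbb{v}^\mathbb{y}}{\interv{\mathbb{v}^\mathbb{x}}}$ in the SC-projection $\mathcal{H}^s$. I would first apply the Hedge-based non-identifiability construction of Shpitser and Pearl to $\mathcal{H}^s$ to produce two SCMs $N_1, N_2$ over $\mathbb{S}$ which agree on the observational distribution over $\mathbb{S}$ but disagree on $\probac{\mathbb{v}^\mathbb{y}}{\interv{\mathbb{v}^\mathbb{x}}}$. Then I would lift each $N_i$ to a DTDSCM $\mathcal{M}_i$ by instantiating its structural equations across time. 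For every bidirected edge of $\mathcal{H}^s$ that already belongs to $\mathcal{G}^s$, the corresponding shared latent is placed directly in $\mathbb{L}$ of $\mathcal{M}_i$ and connected to its two endpoints at a chosen time, which induces that same bidirected edge in the FT-ADMG without harming compatibility.

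The hardest step is to deal with the bidirected edges that the SC-projection has added \emph{within} each SCC of $\mathcal{G}^s$: such edges are absent from $\mathcal{G}^s$ and so cannot be instantiated by a latent confounder at the FT-ADMG level without breaking compatibility. To bypass this, for each such added edge between vertices $X, Y$ lying in a common SCC, I would exploit the directed cycle $X \to \cdots \to Y \to \cdots \to X$ that $\mathcal{G}^s$ must contain by definition of strongly connected component: the shared latent of $N_i$ is injected into a non-intervened cycle vertex (placed, in temporal order within the FT-ADMG, early enough to precede both $\mathbb{V}^X$ and $\mathbb{V}^Y$) and then propagated forward along edges of the FT-ADMG that are already prescribed by the SCG cycle, thereby reaching $\mathbb{V}^X$ and $\mathbb{V}^Y$ without requiring any new edge. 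Compatibility with $\mathcal{G}^s$ is preserved because only existing edges are populated.

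The key verification, which I expect to be the main technical obstacle, is that the macro intervention $\interv{\mathbb{V}^\mathbb{X} = \mathbb{v}^\mathbb{x}}$ does not annihilate this propagated shared randomness before it reaches $\mathbb{V}^\mathbb{Y}$: this uses the existence of a forward-in-time, $\mathbb{V}^\mathbb{X}$-avoiding portion of the cycle that survives the intervention, together with the fact that macro intervention severs only incoming edges to $\mathbb{V}^\mathbb{X}$, not outgoing ones. Once this transport argument is established, and the observational equality is checked time-by-time from the Shpitser-Pearl parity cancellations, the DTDSCMs $\mathcal{M}_1$ and $\mathcal{M}_2$ witness the non-identifiability of $\probac{\mathbb{v}^\mathbb{y}}{\interv{\mathbb{v}^\mathbb{x}}}$ in $\mathcal{G}^s$, completing the proof.
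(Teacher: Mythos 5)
Your overall framing (exhibit two compatible DTDSCMs that agree observationally but disagree on $\probac{\mathbb{v}^\mathbb{y}}{\interv{\mathbb{v}^\mathbb{x}}}$) is the right notion of a witness, but the construction breaks down exactly at the step you flag as the main obstacle, and for a structural rather than merely technical reason. The bidirected edges that the SC-projection adds inside a strongly connected component are \emph{not} realizable as hidden confounding in any compatible FT-ADMG: by Definition~\ref{def:SCG}, any dependence between $\mathbb{L}^{x_{t'}}$ and $\mathbb{L}^{y_t}$ would force $X\longdashleftrightarrow Y$ into the SCG, so your only option is, as you propose, to route the shared randomness through an observed cycle vertex. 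But then the ``latent'' reaches one endpoint only through an observed mediator, which screens it off; the Shpitser--Pearl parity construction relies essentially on the common cause being hidden, and its observational-indistinguishability argument does not survive this substitution. The clearest counterexample to the plan is Figure~\ref{fig:non_identifiable:a}: there the SC-Hedge involves only one artificial bidirected edge, and in \emph{every} compatible FT-ADMG no bidirected edge joins $\mathbb{V}^{\mathbb{X}}$ to its complement, so the hedge criterion of Shpitser and Pearl shows the query is identifiable within each \emph{fixed} compatible FT-ADMG. Hence two models inducing the same FT-ADMG and the same observational law necessarily agree on the interventional law, and any witness pair must consist of models over two \emph{different} compatible FT-ADMGs, exploiting the lag/orientation ambiguity inside the cycle (is the contemporaneous dependence realized as $X_t\rightarrow Y_t$ or as $Y_t\rightarrow X_t$?). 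Your lifting of $N_1,N_2$ produces two models over a common temporal skeleton and therefore cannot witness non-identifiability in this case.

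For comparison, the paper's proof does not construct explicit models at all: it inducts on the number of bidirected edges of the C-forest $\mathbb{F}$ that are present in $\mathcal{H}^s$ but not in $\mathcal{G}^s$, handling the genuine-Hedge case by lifting the Hedge into a single time slice of one compatible FT-ADMG, handling the case $\scc{X}{\mathcal{G}^s}=\scc{Y}{\mathcal{G}^s}$ via an unblockable path through descendants of $Y_t$, and in the inductive step arguing that any do-calculus decomposition leaves behind a sub-query whose SC-Hedge has strictly fewer artificial edges. If you wish to keep your explicit-model strategy (which, if completed, would arguably be stronger and more self-contained than the paper's sketch), the missing ingredient is a pair of \emph{differently timed} compatible FT-ADMGs realizing the two possible orientations of each within-SCC adjacency, together with a matching of their observational laws --- not a transported latent.
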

\begin{proof}
    Consider an SCG $\mathcal{G}^s=(\mathbb{S},\mathbb{E})$, its SC-projection $\mathcal{H}^s$, and a SC-Hedge $\mathbb{F},\mathbb{F}'$ for $\probac{\mathbb{v}^\mathbb{y}}{\interv{\mathbb{v}^\mathbb{x}}}$.
    Let us prove Theorem~\ref{theorem:SC-Hedge} by induction on the number of bidirected dashed edges in the C-forest $\mathbb{F}$ that are in $\mathcal{H}^s$ but not in $\mathcal{G}^s$ (\ie, which are artificially induced by cycles).
    Firstly, if $\mathbb{F},\mathbb{F}'$ is an Hedge in $\mathcal{G}^s$ then there exists a compatible FT-ADMG in which for any $\exists t\in [t_0,t_{max}],~\mathbb{F}_{t} = \{F_{t} \mid F \in \mathbb{F}\},\mathbb{F}'_{t} = \{F_{t} \mid F \in \mathbb{F}'\}$ is an Hedge for $\probac{\mathbb{v}^\mathbb{y}}{\interv{\mathbb{v}^\mathbb{x}}}$ and thus $\probac{\mathbb{v}^\mathbb{y}}{\interv{\mathbb{v}^\mathbb{x}}}$ is not identifiable.
    Secondly, if there exists $X\in\mathbb{X}$ and $Y\in\mathbb{Y}$ such that $\scc{X}{\mathcal{G}^s} = \scc{Y}{\mathcal{G}^s}$ then there exists a compatible FT-ADMG in which $\exists t\in [t_0,t_{max}]$ there is a directed path from $Y_{t}$ to $X_{t}$. This path must be blocked but every vertex on this path is a descendant of $Y_{t}$ and thus cannot be adjusted on without inducing a bias.
    Lastly, assume that Theorem~\ref{theorem:SC-Hedge} is true for any SC-Hedge with $k$ bidirected dashed edges which are in $\mathcal{H}^s$ but not in $\mathcal{G}^s$ and that $\mathbb{F}$ has $k+1$ such edges. Then, one cannot identify the macro total effect by the adjustment formula~\cite{Shpitser_2010} due to the ambiguity \citep{Assaad_2024} induced by the cycle on $X$ or due to the bias induced by the latent confounder of $X$ that cannot be removed without using any rule of the do-calculus. Moreover, any decomposition of the effect using the do-calculus will necessitate the identification of other macro total effects. These effects have sub-C-forests of $\mathbb{F},\mathbb{F}'$ as SC-Hedges and at least one of theses sub-C-forests has at most $k$ artificial bidirected dashed edges induced by cycles. The associated total macro effect is therefore unidentifiable by induction.
\end{proof}

For example, each SCG depicted in Figures~\ref{fig:SCG_FTADMG} and \ref{fig:non_identifiable} features an SC-Hedge. This is apparent from their SC-projections shown in Figure~\ref{fig:projections}, where an Hedge is clearly present. This implies, based on Theorem~\ref{theorem:SC-Hedge}, that in these cases, the total effect is not identifiable.



\section{Related Works}
\label{sec:discussion}
The idea of representing complex low-level causal structures as partially specified high-level graphs has gained some interest in the last years. 
For instance,
\citealt{iwasaki_1994,chalupka_2016,Anand_2023,Wahl_2024,Ferreira_2024,Assaad_2024} have explored clustering of low-level variables. Notably, \citealt{Anand_2023} introduced cluster-ADMGs, which generalize acyclic directed mixed graphs by allowing flexible partitioning of variables while maintaining acyclicity. They extended d-separation and the do-calculus to accommodate macro causal effects in these graphs. Even if SCGs, discussed in our paper, are somehow related (in both graphs vertices represent clusters) to cluster-ADMGs they are inherently different since SCGs can contain cycles.
It should be noted that \citealt{Richardson_1997, Forre_2017, Forre_2018, Forre_2020} extended d-separation and do-calculus to DMGs, potentially applicable to macro queries in cluster-DMGs. Although SCGs are a type of cluster-DMG, the applicability of these findings to SCGs remains uncertain due to the specific temporal information within each cluster in an SCGs, which might affect the completeness for macro queries but not necessarily their soundness. Our research suggests that this temporal knowledge does not improve the identifiability of macro queries.

In dynamic systems, \citealt{Assaad_2023,Ferreira_2024,Assaad_2024} considered SCGs but for micro queries. They showed that the total effect and the direct effect is identifiable under some mild conditions if there are no hidden confounding which is different from macro causal effects as they are always identifiable when there is no hidden confounding. 
As far as we know there exists still no sound and complete result for d-separation and the do-calculus for SCG in the more general case for micro queries, \ie,
the results presented in this paper do not apply to micro queries. For example, in the context of conditional independencies, even if $\dsepc{X}{Y}{\mathbb{W}}{\mathcal{G}^s}$ in the SCG, we cannot infer that $\dsepc{X_{t-1}}{Y_t}{\tilde{\mathbb{V}}^{\mathbb{W}}}{\mathcal{G}}$ where $\mathcal{G}$ is compatible with $\mathcal{G}^s$ and $\tilde{\mathbb{V}}^{\mathbb{W}}$ does not include all variables represented by $\mathbb{V}^{\mathbb{W}}$. 
In the context of total effects, \citealt{Assaad_2024} demonstrated that the total effect $\probac{y_t}{\interv{x_{t-1}}}$ is identifiable even if there is a bidirectional arrow $X \rightleftarrows Y$ in the SCG, provided there are no other cycles involving $X$ and $Y$ as in Figure~\ref{fig:non_identifiable:a}.
However, the do-calculus rules described in this paper cannot identify micro queries. It is not complete in the sense that it does not identify every identifiable queries.
In the context of macro queries within dynamic systems, \citealt{Reiter_2024} explored the identifiability of causal effects from one spectrum (or frequency) to another using SCGs, without delving into the graphical conditions like the ones presented in this paper. Furthermore, in parallel (and independently) to this work, \citealt{Boeken_2024} examined a more general scenario where time is treated as continuous, and instantaneous cycles are possible in the full-time graph.


In a different view, \citealt{Rubenstein_2017, Beckers_2019} studied more general notions of abstraction, namely exact ($\tau-\omega$)-transformations and $\tau$-abstractions.
If one is only focused in the macro effects then the SCG can be seen as a constructive $\tau$-abstraction where $\tau$ is the identity in $\mathbb{R}^{|[t_0,t_{max}]|}$ (assuming the micro variables take values in $\mathbb{R}$).
Which means that all contributions of this paper can contribute to the development of the theory of more general notion of abstraction.
However, this is no longer true if one is interested in the micro effects.

\section{Conclusion}
\label{sec:conclusion}

In this paper, 
we established the soundness and completeness of d-separation and the do-calculus for respectively identifying macro conditional independencies and macro total effects in SCGs. 
By doing so, we bridged the gap between SCGs and many real-world applications in epidemiology. 

There are three main limitations to this work. The first limitation is that the completeness result in Theorem~\ref{theorem:completeness_do-calculus} does not take into account that there might exists different iterations of the rules of the do-calculus in different FT-ADMGs that can give the same final identification of the total effect. 
A second related limitation is that we provided a graphical characterization for the non-identifiability of macro total effects, however this characterization is not necessarily complete, therefore in future works we want to see if a complete characterization might involve SC-Hedges in some way. 
The third limitation is that our results require a complete observation in time of the entire event system, which is often not feasible in practice. For example, it is challenging to ensure that data collection begins at the onset of each epidemic outbreak. Without this assumption, past instances of observed variables might act as unobserved confounders.
In these situations, it would be worthwhile to explore the potential use of instrumental variables~\citep{Pearl_2000} to address these challenges. When instrumental variables are unavailable, our results may still be useful for \emph{partially} identifying macro total effects using SCGs by deriving bounds for the effect under a minimal set of assumptions~\citep{Robins_1989, Manski_1990, Pearl_2000, Zhang_2021}.


\appendix

\section{Acknowledgments}
We thank Fabrice Carrat from IPLESP for his valuable discussions regarding the application of this work in epidemiology. Additionally, we thank Philip Boeken and Joris Mooij from KdVI for their valuable input on dynamic structural causal models and cyclic graphs. Last but not least, we are thankful to Jonas Wahl from TU Berlin for his insightful discussions regarding cluster graphs and summary causal graphs in the frequency domain. This work was supported by the CIPHOD project (ANR-23-CPJ1-0212-01).

\bibliography{references.bib}

\newpage

\section{Appendix}

As in \citealt{Ferreira_2024}, we define the notion of primary path of a walk $\tilde{\pi}=\langle V^1,\cdots,V^n \rangle$ as $\pi = \langle U^1,\cdots,U^m \rangle$ where $U^1 = V^1$ and $U^{k+1} = V^{max\{i\mid V^i = U^k\}+1}$.

\propertyone*

\begin{proof}
	Let $\mathcal{G}=(\mathbb{V},\mathbb{E})$ be an FT-ADMG, $\mathbb{W}\subseteq\mathbb{V}$ and $\tilde{\pi}=\langle V^1,\cdots,V^n\rangle$ be an $\mathbb{W}$-active walk.
	Suppose that its primary path $\pi = \langle U^1,\cdots,U^m \rangle$ is $\mathbb{W}$-blocked and take $1<i<m$ such that $\langle U^{i-1},U^i,U^{i+1}\rangle$ is $\mathbb{W}$-blocked.

	If $U^{i-1}\starbarstar U^{i}\rightarrow U^{i+1}$ (or symmetrically if $U^{i-1} \leftarrow U^{i} \starbarstar U^{i+1}$) and $U^{i}\in\mathbb{W}$ then there exists $1<j<n$ such that $U^i=V^j$ and $U^{j+1}=V^{j+1}$ and thus $V^{j-1}\starbarstar V^{j}\rightarrow V^{j+1}$ is in $\tilde{\pi}$ and $V^j=U^{i}\in\mathbb{W}$.
	Therefore $\tilde{\pi}$ is $\mathbb{W}$-blocked.

	Otherwise, $U^{i-1}\stararrow U^{i}\arrowstar U^{i+1}$ and $\descendants{U^{i}}{\mathcal{G}}\cap\mathbb{W}=\emptyset$.
	Then, there exists $1<j<n$ such that $U^{i-1}=V^{j-1}$ and $U^{i}=V^{j}$ and thus $ V^{j-1} \stararrow V^j \cdots  \arrowstar $ is in $\tilde{\pi}$.
	Take $j\leq k < n$ such that $V^{j-1} \stararrow V^j \rightarrow \cdots \rightarrow V^{k} \arrowstar$ is in $\tilde{\pi}$ and notice that $\stararrow V^k \arrowstar$ and $\descendants{V^k}{\mathcal{G}} \subseteq \descendants{V^{j}}{\mathcal{G}} = \descendants{U^{i}}{\mathcal{G}}$ so $\descendants{V^k}{\mathcal{G}} \cap \mathbb{W} = \emptyset$.
	Therefore, $\tilde{\pi}$ is $\mathbb{W}$-blocked which contradicts the initial assumption.
\end{proof}

\propertytwo*

\begin{proof}
	Let $\mathcal{G}=(\mathbb{V},\mathbb{E})$ be an FT-ADMG, $\mathcal{G}^s=(\mathbb{S},\mathbb{E}^s)$ its compatible SCG and $\mathbb{A},\mathbb{B}\subseteq\mathbb{S}$.
	
	Firstly, let us show that every arrow in $\mathcal{G}^ s_{\bar{\mathbb{A}}\underaccent{\bar}{\mathbb{B}}}$ corresponds to an arrow in $\mathcal{G}_{\bar{\mathbb{V}^{\mathbb{A}}}\underaccent{\bar}{\mathbb{V}^{\mathbb{B}}}}$.
	Let $X,Y\in\mathbb{S}$ such that $X\rightarrow Y$ is in $\mathbb{E}^s_{\bar{\mathbb{A}}\underaccent{\bar}{\mathbb{B}}}$.
	Because $X\rightarrow Y$ is in $\mathbb{E}^s_{\bar{\mathbb{A}}\underaccent{\bar}{\mathbb{B}}}$, we know that $X\notin\mathbb{B}$ and $Y\notin \mathbb{A}$ and that $X\rightarrow Y$ is in $\mathbb{E}^s$.
	Therefore, there exists $t_0\leq t_X \leq t_Y \leq t_{max}$ such that $X_{t_X}\rightarrow Y_{t_Y}$ is in $\mathbb{E}$ and $X_{t_X}\notin\mathbb{V}^{\mathbb{B}}$ and $Y_{t_Y}\notin \mathbb{V}^{\mathbb{A}}$ so $X_{t_X}\rightarrow Y_{t_Y}$ is in $\mathbb{E}_{\bar{\mathbb{V}^{\mathbb{A}}}\underaccent{\bar}{\mathbb{V}^{\mathbb{B}}}}$.
	
	Similarly, let $X,Y\in\mathbb{S}$ such that $X\longdashleftrightarrow Y$ is in $\mathbb{E}^s_{\bar{\mathbb{A}}\underaccent{\bar}{\mathbb{B}}}$.
	Because $X\longdashleftrightarrow Y$ is in $\mathbb{E}^s_{\bar{\mathbb{A}}\underaccent{\bar}{\mathbb{B}}}$, we know that $X,Y\notin \mathbb{A}$ and that $X\longdashleftrightarrow Y$ is in $\mathbb{E}^s$.
	Therefore, there exists $t_X , t_Y \in [t_0,t_{max}]$ such that $X_{t_X}\longdashleftrightarrow Y_{t_Y}$ is in $\mathbb{E}$ and $X_{t_X},Y_{t_Y}\notin \mathbb{V}^{\mathbb{A}}$ so $X_{t_X}\longdashleftrightarrow Y_{t_Y}$ is in $\mathbb{E}_{\bar{\mathbb{V}^{\mathbb{A}}}\underaccent{\bar}{\mathbb{V}^{\mathbb{B}}}}$.
	
	Secondly, let us show that every arrow in $\mathcal{G}_{\bar{\mathbb{V}^{\mathbb{A}}}\underaccent{\bar}{\mathbb{V}^{\mathbb{B}}}}$ corresponds to an arrow in $\mathcal{G}^ s_{\bar{\mathbb{A}}\underaccent{\bar}{\mathbb{B}}}$.
	Let $X_{t_X},Y_{t_Y}\in\mathbb{V}_{\bar{\mathbb{V}^{\mathbb{A}}}\underaccent{\bar}{\mathbb{V}^{\mathbb{B}}}}$ such that $X_{t_X}\rightarrow Y_{t_Y}$ is in $\mathbb{E}_{\bar{\mathbb{V}^{\mathbb{A}}}\underaccent{\bar}{\mathbb{V}^{\mathbb{B}}}}$.
	Because $X_{t_X}\rightarrow Y_{t_Y}$ is in $\mathbb{E}_{\bar{\mathbb{V}^{\mathbb{A}}}\underaccent{\bar}{\mathbb{V}^{\mathbb{B}}}}$, we know that $X_{t_X}\notin\mathbb{V}^{\mathbb{B}}$ and $Y_{t_Y}\notin \mathbb{V}^{\mathbb{A}}$ and that $X_{t_X}\rightarrow Y_{t_Y}$ is in $\mathbb{E}$.
	Therefore, $X\rightarrow Y$ is in $\mathbb{E}^s$ and $X\notin{\mathbb{B}}$ and $Y\notin {\mathbb{A}}$ so $X\rightarrow Y$ is in $\mathbb{E}^s_{\bar{\mathbb{A}}\underaccent{\bar}{\mathbb{B}}}$.
	
	Similarly, let $X_{t_X},Y_{t_Y}\in\mathbb{V}_{\bar{\mathbb{V}^{\mathbb{A}}}\underaccent{\bar}{\mathbb{V}^{\mathbb{B}}}}$ such that $X_{t_X}\longdashleftrightarrow Y_{t_Y}$ is in $\mathbb{E}_{\bar{\mathbb{V}^{\mathbb{A}}}\underaccent{\bar}{\mathbb{V}^{\mathbb{B}}}}$.
	Because $X_{t_X}\longdashleftrightarrow Y_{t_Y}$ is in $\mathbb{E}_{\bar{\mathbb{V}^{\mathbb{A}}}\underaccent{\bar}{\mathbb{V}^{\mathbb{B}}}}$, we know that $X_{t_X},Y_{t_Y}\notin \mathbb{V}^{\mathbb{A}}$ and that $X_{t_X}\longdashleftrightarrow Y_{t_Y}$ is in $\mathbb{E}$.
	Therefore, $X\longdashleftrightarrow Y$ is in $\mathbb{E}^s$ and $X,Y\notin {\mathbb{A}}$ so $X\longdashleftrightarrow Y$ is in $\mathbb{E}^s_{\bar{\mathbb{A}}\underaccent{\bar}{\mathbb{B}}}$.
\end{proof}

\end{document}